\documentclass[journal]{IEEEtran}
\usepackage{amsmath,dsfont,bbm,epsfig,amssymb,amsfonts,amstext,verbatim,amsopn,cite,subfigure}
\usepackage{multirow,multicol,lipsum,xfrac}
\usepackage{amsthm}
\usepackage{mathtools,amsthm}
\usepackage{url}
\usepackage{amsfonts}
\usepackage{epsfig}
\usepackage[font=small]{caption}
\usepackage{psfrag}	
\usepackage{algorithm,algorithmic}
\usepackage{pifont}
\usepackage[utf8]{inputenc}
\usepackage[T1]{fontenc}  
\usepackage[nolist]{acronym}
\usepackage{paralist}
\usepackage{enumitem}
\usepackage{bbm}
\usepackage[process=auto]{pstool}
\usepackage{float}
\usepackage{subfigure, xfp}
\usepackage{graphicx}
\usepackage{caption}
\usepackage{tikz,pgfplots}
\usetikzlibrary{shapes.misc,arrows,positioning}

\usepackage[utf8]{inputenc} 
\usepackage{amsmath,dsfont,bbm,epsfig,amssymb,amsfonts,amstext,verbatim,amsopn,cite}
\usepackage{multirow,multicol,lipsum,xfrac}
\usepackage{amsthm}
\usepackage{mathtools,amsthm}
\usepackage{url}
\usepackage{amsfonts}
\usepackage{tikz,pgfplots}
\usepackage[nolist]{acronym}

\newcommand{\setR}{\mathbbmss{R}}

\newcommand{\setS}{\mathbbmss{S}}
\newcommand{\setW}{\mathbbmss{W}}

\newcommand{\setC}{\mathbbmss{C}}

\newcommand{\Ex}[2]{ \mathbbm{E}_{#2} \left\lbrace #1 \right\rbrace }

\newcommand{\her}{\mathsf{H}}

\newcommand{\argmax}{\mathop{\mathrm{argmax}}}

\newcommand{\maK}{\mathcal{K}}

\newcommand{\maP}{\mathcal{P}}
\newcommand{\mae}{\mathcal{E}}

\newcommand{\mas}{\mathcal{S}}

\newcommand{\mam}{\mathcal{M}}
\newcommand{\maD}{\mathcal{D}}

\newcommand{\mal}{\mathcal{L}}

\newcommand{\bxx}{\mathbf{x}}

\newcommand{\bww}{\mathbf{w}}
\newcommand{\buu}{\mathbf{u}}

\newcommand{\bvv}{\mathbf{v}}
\newcommand{\byy}{\mathbf{y}}
\newcommand{\bff}{\mathbf{f}}

\newcommand{\rmr}{\mathrm{r}}

\newcommand{\bh}{{\mathbf{h}}}

\newcommand{\bphi}{\boldsymbol{\phi}}

\newcommand{\set}[1]{\left\lbrace#1\right\rbrace}
\newcommand{\Diag}[1]{\mathrm{Diag}\left\lbrace #1 \right\rbrace}
\newcommand{\brc}[1]{\left( #1 \right) }

\newcommand{\dbc}[1]{\left[ #1 \right] }

\newcommand{\br}{{\mathbf{r}}}

\newcommand{\bc}{{\mathbf{c}}}

\newcommand{\bzz}{{\mathbf{z}}}
\newcommand{\baa}{{\mathbf{a}}}

\newcommand{\trp}{\mathsf{T}}

\newcommand{\mA}{\mathbf{A}}
\newcommand{\mC}{\mathbf{C}}
\newcommand{\mF}{\mathbf{F}}
\newcommand{\mR}{\mathbf{R}}

\newcommand{\mI}{\mathbf{I}}

\newcommand{\mG}{\mathbf{G}}
\newcommand{\mQ}{\mathbf{Q}}

\newcommand{\mU}{\mathbf{U}}

\newcommand{\mD}{\mathbf{D}}

\newcommand{\mX}{\mathbf{X}}

\newcommand{\mW}{{\mathbf{W}}}

\newcommand{\mB}{\mathbf{B}}
\newcommand{\mT}{\mathbf{T}}
\newcommand{\mH}{\mathbf{H}}

\newcommand{\rme}{\mathrm{e}}

\newcommand{\itr}[1]{^{\left( #1 \right)} }

\newcommand{\norm}[1]{\lVert #1 \rVert}

\newcommand{\abs}[1]{\lvert #1 \rvert}
\newcommand{\tr}[1]{\mathrm{tr} \{ #1 \}}

\newtheoremstyle{mystyle}
{}
{}
{\it}
{}
{\bfseries}
{:}
{ }
{}

\theoremstyle{mystyle}

\newtheorem{definition}{Definition}
\newtheorem{proposition}{Proposition}
\newtheorem{remark}{Remark}

\newtheorem{lemma}{Lemma}

%
%
%

%
%
%
%

%

\newcounter{bar}


\setlength{\abovedisplayskip}{.5pt}
\setlength{\belowdisplayskip}{0.5pt}

\hyphenation{op-tical net-works semi-conduc-tor}

\usepackage{pgfplots}
\usepackage{filecontents}
\usetikzlibrary{calc}

\begin{document}
	
	\newcommand{\xmark}{\ding{55}}%
	\newcommand{\bomega}{\boldsymbol{\omega}}
	\begin{acronym}
		\acro{ai}[AI]{artificial intelligence}
		\acro{Cram\'er-Rao bound}[Cram\'er-Rao bound]{Cram\'er-Rao bound}
		\acro{mimo}[MIMO]{multiple-input multiple-output}
		\acro{miso}[MISO]{multiple-input single-output}
		\acro{csi}[CSI]{channel state information}
		\acro{ai}[AI]{artificial intelligent}
		\acro{awgn}[AWGN]{additive white Gaussian noise}
		\acro{iid}[i.i.d.]{independent and identically distributed}
		\acro{uts}[UTs]{user terminals}
		\acro{ps}[PS]{parameter server}
		\acro{irs}[IRS]{intelligent reflecting surface}
		\acro{tas}[TAS]{transmit antenna selection}
		\acro{isac}[ISAC]{integrated sensing and communication}
		\acro{rhs}[r.h.s.]{right hand side}
		\acro{lhs}[l.h.s.]{left hand side}
		\acro{wrt}[w.r.t.]{with respect to}
		\acro{rs}[RS]{replica symmetry}
		\acro{mac}[MAC]{multiple access channel}
		\acro{np}[NP]{non-deterministic polynomial-time}
		\acro{papr}[PAPR]{peak-to-average power ratio}
		\acro{rzf}[RZF]{regularized zero forcing}
		\acro{snr}[SNR]{signal-to-noise ratio}
		\acro{sinr}[SINR]{signal-to-interference-and-noise ratio}
		\acro{svd}[SVD]{singular value decomposition}
		\acro{mf}[MF]{matched filtering}
		\acro{gamp}[GAMP]{generalized AMP}
		\acro{amp}[AMP]{approximate message passing}
		\acro{vamp}[VAMP]{vector AMP}
		\acro{map}[MAP]{maximum-a-posterior}
		\acro{ml}[ML]{maximum likelihood}
		\acro{mse}[MSE]{mean squared error}
		\acro{mmse}[MMSE]{minimum mean squared error}
		\acro{ap}[AP]{average power}
		\acro{ldgm}[LDGM]{low density generator matrix}
		\acro{tdd}[TDD]{time division duplexing}
		\acro{rss}[RSS]{residual sum of squares}
		\acro{rls}[RLS]{regularized least-squares}
		\acro{ls}[LS]{least-squares}
		\acro{erp}[ERP]{encryption redundancy parameter}
		\acro{zf}[ZF]{zero forcing}
		\acro{ta}[TA]{transmit-array}
		\acro{ofdm}[OFDM]{orthogonal frequency division multiplexing}
		\acro{dc}[DC]{difference of convex}
		\acro{bcd}[BCD]{block coordinate descent}
		\acro{mm}[MM]{majorization-maximization}
		\acro{bs}[BS]{base-station}
		\acro{ota}[OTA]{over-the-air}
		\acro{ULA}[ULA]{uniform linear array}
		\acro{fl}[FEEL]{federated edge learning}
		\acro{ota-fl}[OTA-FEEL]{over-the-air federated learning}
		\acro{los}[LoS]{line-of-sight}
		\acro{nlos}[NLoS]{non-line-of-sight}
		\acro{AoA}[AoA]{angle of arrival}
		\acro{AoD}[AoD]{angle of departure}
		\acro{CNN}[CNN]{convolutional neural network}
		\acro{sgd}[SGD]{stochastic gradient descent}
		\acro{aircomp}[AirComp]{over-the-air computation}
		\acro{lmi}[LMI]{linear matrix inequalities}
		\acro{iot}[IoT]{internet of things }
		\acro{iscco}[ISCCO]{Integrated Sensing, Communication and Computation}
		\acro{nn}[NN]{neural network}
		\acro{cnn}[CNN]{convolutional neural network}
		\acro{oac}[OAC]{over-the-air computation}
		\acro{iscc}[ISCC]{integrated sensing communication and computation}
		\acro{mec}[MEC]{mobile edge computing}
	\end{acronym}
	
	\title{Over-the-Air~FEEL with Integrated Sensing: Joint Scheduling and Beamforming Design}
	
	\author{
		\IEEEauthorblockN{
			Saba Asaad, \textit{Member IEEE}, 
			Ping Wang, \textit{Fellow IEEE},
            and Hina Tabassum, \textit{Senior Member IEEE}\vspace{-5mm}
			\thanks{This research was supported by NSERC Discovery Grant funded by the
				Natural Sciences and Engineering Research Council of Canada (NSERC) and Banting Fellowship. 
				\\
				Saba Asaad, Ping Wang, and Hina Tabassum are with the Department of Electrical Engineering and Computer Science at York University, Toronto, Canada; emails:  \{asaads, pingw, hinat\}@yorku.ca.}
		}
	}
	
	\IEEEoverridecommandlockouts
	\maketitle
	\begin{abstract}
\raggedbottom 
Employing wireless systems with dual sensing and communications functionalities is becoming critical in next generation of wireless networks. 
In this paper, we propose a robust design for over-the-air federated edge learning (OTA-FEEL) that leverages sensing capabilities at the \ac{ps} to mitigate the impact of target echoes on the analog model aggregation. We first derive novel expressions for the Cram\'er-Rao bound of the target response and \ac{mse} of the estimated global model to measure radar sensing and model aggregation quality, respectively. Then, we develop a joint scheduling and beamforming framework that optimizes the OTA-FEEL performance while keeping the sensing and communication quality, determined respectively in terms of Cram\'er-Rao bound and achievable downlink rate, in a desired range. The resulting scheduling problem reduces to a combinatorial mixed-integer nonlinear programming problem (MINLP). We develop a low-complexity hierarchical method based on the matching pursuit algorithm used widely for sparse recovery in the literature of compressed sensing. The proposed algorithm uses a step-wise strategy to omit the least effective devices in each iteration based on a metric that captures both the aggregation and sensing quality of the system. It further invokes alternating optimization scheme to iteratively update the downlink beamforming and uplink post-processing by marginally optimizing them in each iteration. Convergence and complexity analysis of the proposed algorithm is presented. Numerical evaluations on MNIST and CIFAR-10 datasets demonstrate the effectiveness of our proposed algorithm. The results show that by leveraging accurate sensing, the target echoes on the uplink signal can be effectively suppressed, ensuring the quality of model aggregation to remain intact despite the interference. 
	\end{abstract}	
	\begin{IEEEkeywords}
		Over-the-air federated learning, integrated sensing and computation, Cram\'er-Rao bound, device scheduling, matching pursuit algorithm.
	\end{IEEEkeywords}
\vspace{-0.2cm}
\section{Introduction}

Next generations of data networks are expected to support a wide range of artificial intelligence (AI) services \cite{zhu2023pushing}. While conventional communication networks exchange exact information context, e.g., text and voice, the recent data networks exchange parameters that \textit{implicitly} convey information to provide different functionalities across the network \cite{yao2019artificial}. Examples of such functionalities are \ac{isac} and \ac{ota-fl}. 
In \ac{isac}, the sensing data is collected from the already-existing communication signals in the form of echoes and extracted using post-processing techniques to execute a desired sensing task such as localization or velocity estimation.
Similarly, OTA-FEEL uses the physical superposition of the transmitted signals over the uplink wireless channels to realize the intended model aggregation of the FEEL framework over the air. The exchanged data over the network contains the parameters of the local and global models, e.g., the gradients of losses. These parameters are utilized to complete an specific learning task, e.g., to update model weights via \ac{sgd}, and hence \textit{only implicitly} impact the final accuracy of the trained model. 

While integrating multiple functionalities enhances wireless network efficiency, it can also increase vulnerability to cross-functionality couplings, i.e., the undesired interference introduced to the system due to their multi-purpose design. Such couplings, which are often unavoidable, can potentially degrade the expected performance of the system. 
 \subsection{Background and Related Work}
In OTA-FEEL, the server directly recover a noisy aggregation of model updates by exploiting the superposition property of wireless channels. This approach eliminates the need to decode each device’s update individually, allowing simultaneous transmission over the same channel. As a result, OTA-FEEL becomes significantly more communication-efficient, especially with a larger number of participating devices.
However, the analog nature of OTA-FEEL introduces challenges such as vulnerability to fading, noise, and interference \cite{amiri2020federated, asaad2024joint}. These imperfections impact the training process and the effectiveness of model convergence.
As the direct result to this property, efficient signal post-processing within the analog domain is crucial in \ac{ota-fl}.
At each edge device, precoding coefficients are applied before transmission to reduce the effects of channel issues like interference and fading. This “pre-distortion” adjusts each signal so that, after passing through the imperfect channel, the combined signals resemble an accurate average of the local models.
At the \ac{ps}, the receive beamformer is designed to further correct or “post-process” the combined signal after it is received through the noisy channel. This ensures that, despite interference or noise in transmission, the aggregated signal closely approximates the target function.

The initial work \cite{yang2020federated} focused on studying the joint design of beamforming and device scheduling policy in MIMO settings. Building upon that, \cite{yang2022over} investigated the joint adaptive local computation and power control for OTA-FEEL. In \cite{wang2022edge}, a novel unit-modulus computation framework was proposed to minimize communication delay and implementation costs through analog beamforming. Additionally, \cite{sedaghat2023novel} proposed low-complexity algorithms for device coordination in OTA-FEEL using the minimum mean squared error (MMSE) and zero forcing (ZF) methods. The impact of hardware impairments, including low-resolution digital-to-analog converters (DACs), phase noise, and non-linearities in power amplifiers, is examined in \cite{tegin2021blind}, where it is shown that these factors have a negligible effect on the convergence and accuracy of FEEL. 
\par \Ac{isac} refers to a unified system in which wireless communication and radio sensing are integrated within a single architecture with common signal-processing modules. This integration allows \ac{isac} to pursue mutual benefits and significantly enhance spectrum, energy, and hardware efficiency as compared to separate communication and radar sensing in isolated frequency bands \cite{liu2022survey}. 
 \color{black}
Most of the recent studies considered ISAC in a distributed manner where the edge devices act as sensors that first send signals to sense the targets and then deliver the information to the server. 
In \cite{li2022integratedconf}, the authors studied multiple multi-antenna devices with separate antennas for transmission and target echo reception. The target is assumed to be far from the server, ensuring no interference between the target reflections and the signal received by the parameter server (PS) over the multiple access channel. The optimization objective is to minimize the \ac{mse} of target estimation and function computation, resulting in a joint design of transmit beamformers at edge sensors and a data aggregation beamformer at the PS.
Building upon this, the authors in \cite{li2023integrated2} extended  the optimization framework to include the case where transmit antennas at the sensors are divided between radar sensing and OTA computation tasks. 
The problem of client scheduling in \ac{irs}-assisted OTA-FEEL with ISAC has been studied in \cite{zheng2023federated}. The objective is to maximize the number of participating users while meeting the \ac{mse} constraints for both communication and sensing. In this work, radar echo sensing at the devices remains interference-free from downlink PS communication, without any communication rounds.

Previous research on OTA-FEEL has primarily concentrated on classical information-conveying communication systems, leaving the co-existence of sensing largely unexplored and not considering the significant interference that can severely impact analog computation. ISAC in FEEL has recently been explored in \cite{liu2022toward, liu2022vertical}, incorporating sensing capabilities at the devices. For instance, in \cite{liu2022toward}, the authors addressed resource allocation in an \ac{isac}-based FEEL scenario. The approach involved multiple \ac{isac} devices collecting local datasets through wireless sensing in one time slot, and subsequently exchanging model updates with a central server over wireless channels in another time slot. 
In \cite{xing2023task}, a wireless FEEL system with multiple ISAC devices is considered, where clients acquire their specific datasets for human motion recognition through wireless sensing and subsequently communicate exclusively with the edge server to exchange model updates.
The authors  in \cite{liu2022vertical} proposed vertical FEEL where, unlike horizontal FEEL, each edge device has a distinct transmission bandwidth for ISAC purposes, and a coordinating  device performs the target sensing. It is worth mentioning that the aforementioned studies do not take into account the interference of sensing reflections on the analog computation at the PS. This follows from the fact that sensing and OTA aggregation, thanks to the large distance between the target and PS, are orthogonal in time. Although this assumption is realistic in an application-specific remote sensing system, in practical wireless networks it is hardly the case. In fact, in these systems, the received signal at the PS can be severely interfered by reflections subject to sensing.

\raggedbottom
\subsection{Motivation and Contributions}
The core contribution of this work is to 
propose a robust design for OTA-FEEL in the presence of sensing where reflections from the targets can severely degrade the performance of analog model aggregation involved in  over-the-air computation. The motivation of this work arises from recent wireless system advancements enabling joint sensing and communications functionalities. These systems often rely on target reflections of the dual-purpose signals, and hence operate in a mode where target echoes can create interference to communication signals or model aggregation in OTA-FEEL which is our case. Unlike digital communications, which can handle such interference by proper coding techniques, the OTA-FEEL operates in the analog domain and hence is very vulnerable to interference. Our study proposes a novel scheme based on successive interference cancellation to handle this issue in the analog domain. 
The key difference of our study with the existing work on successive interference cancellation that is used in non-orthogonal multiple access integrated (NOMA)-ISAC systems, e.g., \cite{ouyang2023revealing}, is the domain in which the cancellation is being performed. In NOMA-ISAC, the signaling is being performed in the digital domain, whereas in our setting the echo interference is superimposed by uplink signals that are used for aggregation in the analog model. Moreover, standard interference mitigation algorithms generally focus on suppressing interference to improve the quality of model \cite{ouyang2022performance}. However, in our framework, we consider incorporating echo interference cancellation as part of a successive cancellation strategy tailored for \textit{joint sensing and model aggregation}. This approach is more practical for our setting because it {not only} enhances aggregation quality but also utilizes interference as a useful signal for target estimation, providing a dual-purpose benefit that is not available in standard methods. 

Motivated by the above, we consider the problem of \ac{ota-fl} where the \ac{ps} is receiving echoes of its downlink transmission from the scattered targets in the network. Due to difference in time delay, these reflections interfere the analog model aggregation in the uplink leading to large aggregation error. In the sequel, we propose a joint design in which the sensing is \textit{not simply implemented as a new functionality, but rather as a necessity to control its interference on model aggregation.} We characterize the trade-off between sensing and aggregation which unlike the sensing-communication, and aggregation-communication trade-offs is not very evident. On one hand, with weak echoes, the sensing performance is compromised, leading to a higher sensing error with limited impact on the analog aggregation due to its low-power share. On the other hand, with substantial interference, the sensing error is reduced significantly. The impact of this small error on model aggregation  is however amplified  due to its high-power share in uplink. It is thus not immediately clear which regime is optimal and how the aggregation process behaves as the power of echoes varies. We investigate this trade-off in detail to shed light on these intricacies. 

  Our contributions is summarized below:
	\begin{itemize}
		    \item We develop a joint scheduling and beamforming framework to optimize the performance of OTA-FEEL in the presence of  sensing at the \ac{ps}. The framework  enables  suppressing the echo interference on the computation signal by estimating the target parameters. To mitigate the interference caused by target echoes on the aggregated model, we develop a successive interference cancellation scheme which estimates the target parameters and uses  estimated parameters to cancel out the impact of target reflections on the analog model aggregation.
    \item We derive novel expressions for the model aggregation error and Cram\'er-Rao bound on the sensing error for successive interference cancellation. Furthermore, we use these analytic characterizations to formulate the device scheduling and beamforming problem.
    \item We consider joint beamforming and device scheduling problem in which the PS schedules maximal active devices while ensuring the sensing and model aggregation errors remain below a desired level, and that the minimum communication requirements for downlink model broadcast are met. The trade-offs between sensing accuracy and learning quality are then characterized. 
  \item  Leveraging on tools from matching pursuit and optimization, we develop a low-complexity  algorithm to  solve the combinatorial mixed integer non-linear programming problem (MINLP). The algorithm iteratively identifies the largest set of active edge devices using matching pursuit and alternates between downlink beamforming and uplink post-processing. Notably, we demonstrate that these marginal problems decouple in each iteration, resulting in a substantial reduction in computational complexity. Additionally, we provide complexity and convergence analysis for the proposed algorithm.
  \item  
  We conduct extensive numerical investigations on MNIST and CIFAR-10 datasets. Numerical results depict the performance of the proposed algorithm compared to the existing benchmarks.  Our numerical results demonstrate that in various operating points of the system, interference suppression comes as an immediate result of sensing. Due to the accurate sensing, the target echoes on the uplink signal can be effectively cancelled out ensuring that the quality of model aggregation remains intact.  
	\end{itemize}
	\vspace{-0.3cm}
	\subsection{Paper Organization and Notations}
The paper's structure is as follows: Section II introduces the system model. Section III characterizes performance metrics to evaluate sensing and learning quality. Section IV discusses the problem formulation, and Section V presents the algorithm design using the matching pursuit technique. Section VI showcases simulation results, and Section VII concludes the paper.
	
	Scalars, vectors, and matrices are denoted by non-bold, bold lower-case, and bold upper-case letters, respectively. The transpose , the conjucate and the conjugate transpose of $\mH$ are denoted by $\mH^{\trp}$, $\mH^*$ and $\mH^{\her}$, respectively. $\mI_N$ denotes the $N\times N$ identity matrix. The sets $\setR$ and $\setC$ are the real axis and the complex plane, respectively. $\mathcal{CN}\brc{\eta,\sigma^2}$ is the complex Gaussian distribution with mean $\eta$ and variance $\sigma^2$. $\dagger$ denotes the pseudo-inverse. The symbol $\otimes$ is the Kronecker product. $\Ex{.}{}$ denotes the expectation. To keep it concise, $\set{1,\ldots,N}$ is abbreviated as $\dbc{N}$. The symbol $\odot$ is the entry-wise product and  $\nabla$ denotes the gradient operator.
    \vspace{-0.3cm}
	\section{System Model and Assumptions}
	We consider a dual-function \ac{ps} equipped with $N$ antenna elements to support OTA-FEEL with integrated sensing as presented in Fig. \ref{Fig1_Schematic}. A set of $\mathcal{K}={1, 2, \cdots, K}$ single-antenna edge devices and a target located around the \ac{ps} in an open area at a moderate distance and a certain \ac{AoA} are considered. \color{black} The \ac{ps} performs sensing periodically within certain blocks using the echoes of its downlink transmissions. It is assumed that the uplink and downlink transmissions are performed in the same band, and hence the echoes of downlink transmission can interfere with the uplink transmissions of the devices. We denote the length of the sensing blocks with $L$ and refer to the index of a given interval as $\ell \in \set{1,\ldots,L}$. This implies that starting from interval $\ell=1$, the \ac{ps} collects a block of $L$ samples and estimates the target unknowns, i.e., target location, at the end of the block. The block-length $L$ is assumed to be smaller than the channel coherence time, i.e., $ L < L_{\rm coh}$, with $L_{\rm coh}$ being the minimum of uplink and downlink coherence times. Moreover, We assume that the block length $L$ is smaller than the downlink transmission time, i.e., $M$-symbol duration.
The \ac{ps} aggregates a global model by leveraging the over-the-air transmission of locally trained models from the edge devices, a hallmark of OTA-FL. Simultaneously, the PS exploits the echoes from its downlink transmission of the aggregated model to sense the target. In the following, we describe each stage of the OTA-FEEL process with integrated sensing.
 \color{black}

	\begin{figure}
		\centering
		\includegraphics[scale=.25]{ 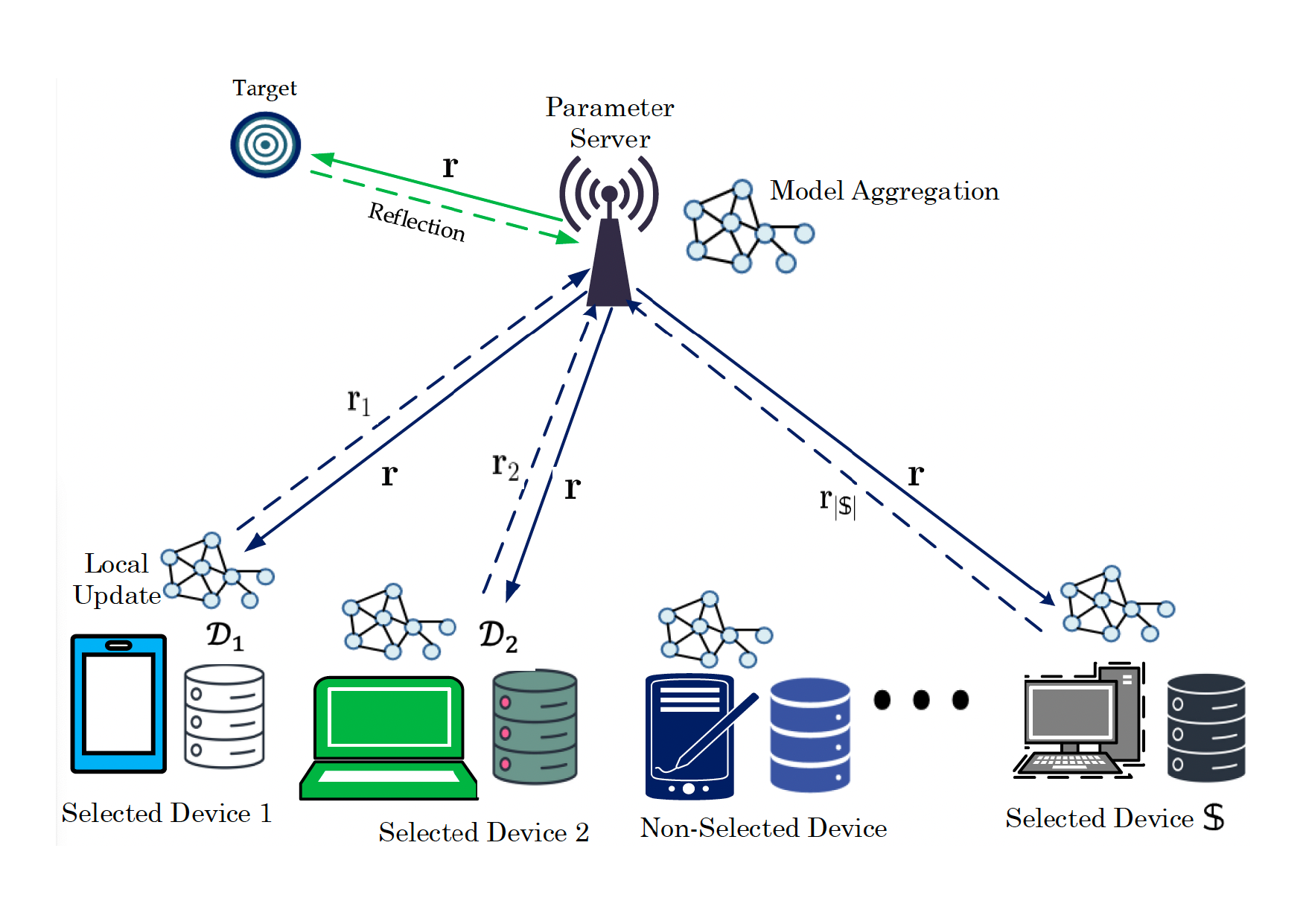}
		\caption{Illustration of the integrated OTA-FEEL and sensing setting. 
  }
		\label{Fig1_Schematic}
	\end{figure}
	\subsubsection{{Global Model Broadcasting}} 
	At the beginning of round $t$, the \ac{ps} broadcasts its aggregated model parameter $\br\itr{t}\in \setR^M$ that is quantized (or compressed) with $B$ bits per symbol. To this end, it encodes $\br\itr{t}$ into signal $\bxx\itr{t}\in \setR^M$ and sends it to the devices over the downlink broadcast channel in $M$ symbol intervals.
	For simplicity, we drop the round index $t$. 
	The \ac{ps} broadcasts $\bxx$ in $M$ intervals via linear beamforming, i.e., in time interval $m$ it transmits $x\dbc{m}\bww\dbc{m}$, where $\bww\dbc{m} \in \setC^N$ denotes its beamforming vector. Thus, device $k$ receives 
	$u_k \dbc{m} = \bh_k^\trp\dbc{m} \bww \dbc{m} x\dbc{m} + \xi_k\dbc{m}$, where $\bh_k\dbc{m}\in\setC^N$ denotes the channel coefficient between the \ac{ps} and device $k$ in interval $m$, and $\xi_k \dbc{m}$ is \ac{awgn} with mean zero and variance $\sigma_k^2$. 
	To address the restricted transmit power, we set  $\Ex{\abs{x\dbc{m}}^2}{} =1$ and
	$\tr{\bww \dbc{m} \bww^\her \dbc{m}} \leq P_{\mathrm{d}}$. The achievable spectral efficiency for device $k$ is given by
 \\
	\begin{align}
		R_k \dbc{m} = \log_2 \brc{1+ \frac{P_{\mathrm{d}}}{\sigma_k^2} \abs{\bh_k^\trp \dbc{m} \bww \dbc{m} }^2}, \label{eq:snr}
	\end{align}
 \\
	which needs to be more than $B$, i.e., minimum rate required for reliable downlink transmission.
	\subsubsection{{Local Model Updating}}  After receiving $\br\itr{t}$, each device updates its local model by running $\vartheta$ steps of stochastic gradient descent, i.e., device $k$ sets $\br_k^{\it{0}}=\br\itr{t}$, and updates $\br_k\itr{j+1}=\br_k\itr{j}-\zeta \nabla \mathcal{L}_k(\br_k\itr{j} \vert \mathcal{D}_k)$ for $j=0, 1, \cdots, \vartheta-1$ and some learning rate $\zeta$. \color{black} It then sets its local model to $\br_k=\br_k\itr{\vartheta}$ that is to be sent to the \ac{ps}. Without loss of generality, we consider that the devices address a supervised learning task. Each device possesses its own local training dataset, represented by $\mathcal{D}_k=\set{ \left(\mathbf{u}_{k,i}, v_{k,i}\right)}_{i=1}^{\vert \mathcal{D}_k \vert}$, where $\mathbf{u}_{k,i}$ is the $i$-th  data-point and $v_{k,i}$ is its corresponding label. The global dataset is the union of these local datasets, i.e., $\mathcal{D}=\cup_{k=1}^K \mathcal{D}_k$. The training minimizes the global loss function determined over $\mathcal{D}$ with respect to the model parameter, i.e., 
	\begin{align}
		\mathcal{L}(\br \vert \maD) \triangleq \sum _{k=1}^K \frac{|\mathcal{D}_k|}{|\mathcal{D}|} \mathcal{L}_k(\br \vert \mathcal{D}_k),
		\label{Loss}
	\end{align}
	where $\br\in \mathbb{R}^D$ is the $D$-dimensional model parameter to be learned, and $\mathcal{L}_k(\br \vert \mathcal{D}_k)$ represents the local loss function at device $k$ defined as $\mathcal{L}_k(\br\vert \mathcal{D}_k) \triangleq \frac{1}{|\mathcal{D}_k|} \sum_{i\in\maD_k } \ell_i \brc{\br\vert \mathcal{D}_k}$,
	with $\ell_i \brc{\br\vert \mathcal{D}_k}$ being the individual point-wise loss function.
 \begin{figure}
		\centering
		\vspace*{-.5cm}
		\hspace*{-.5cm}
		\includegraphics[scale=.35]{ 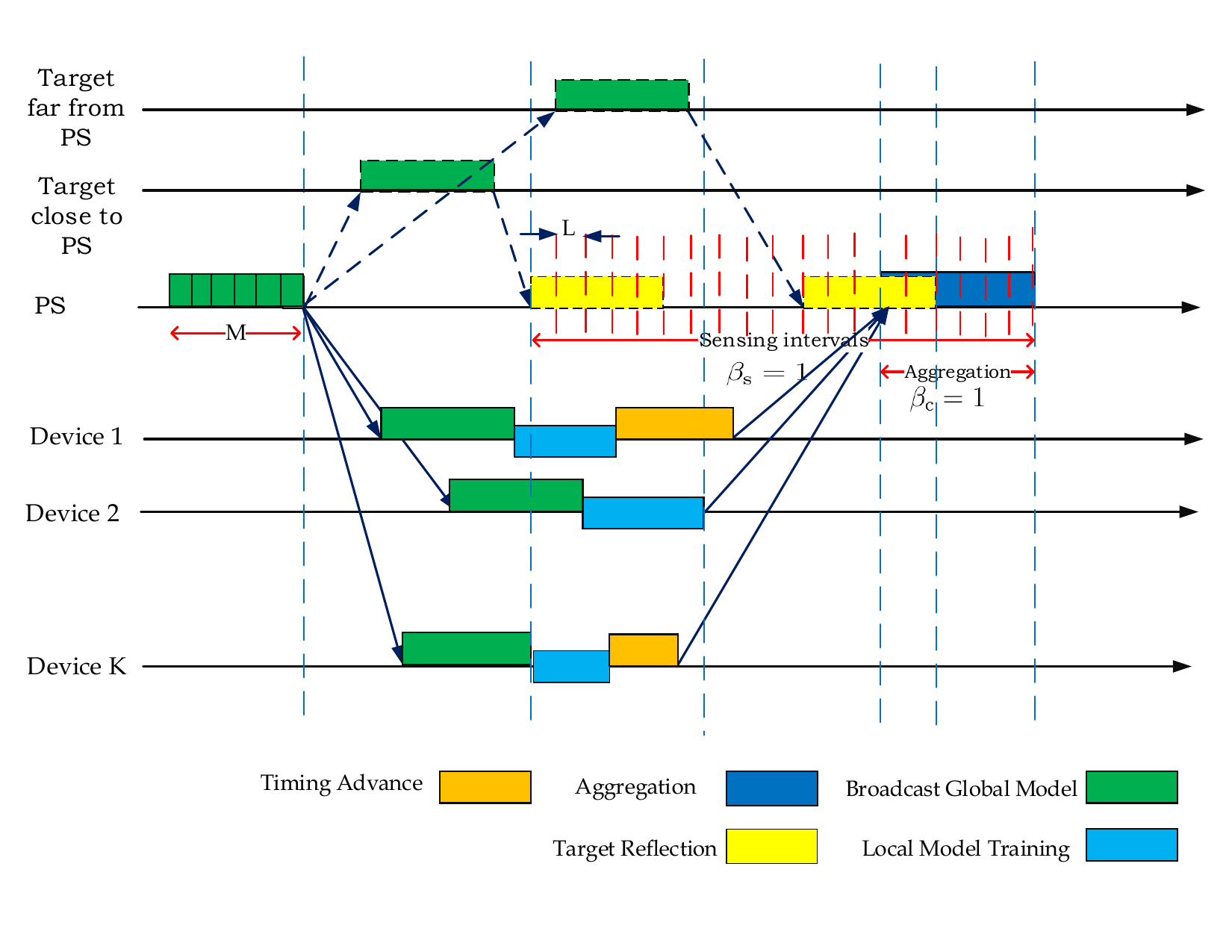}
		\vspace*{-.8cm}
		\caption{Schematic representation of communication in time domain.}
		\label{Fig_Schematic}
  \vspace*{-.5cm}
	\end{figure}
\subsubsection{Device Scheduling and Uplink Transmission}
The \ac{ps} schedules a subset of devices $\setS \subseteq \maK$ to transmit their model parameters. Leveraging the concept of OTA computation, the selected devices transmit their local model updates uncoded over the uplink channel in $M$ discrete intervals. To adhere to the transmit power constraints, each device scales its transmission accordingly. Specifically, in each interval $m$, the transmitted signal from device $k$ is given by:
\begin{align}
s_k[m] = b_k[m] r_k[m],
\end{align}
where $r_k[m]$ represents the local gradient from device $k$ in interval $m$ and is modelled as a zero-mean and unit variance random variable, and $b_k[m]$ is the scaling factor designed to ensure that the power does not exceed the uplink power limit $P_{\mathrm{u}}$, i.e., $\abs{b_{k}[m]}^2 \leq P_{\mathrm{u}}$. Since synchronization among distributed edge devices is essential in systems based on over-the-air computation, we employ the timing advance method commonly used in 4G Long-Term Evolution (LTE) and 5G New Radio technologies, as shown in Fig.~\ref{Fig_Schematic}. This method adjusts the transmission times of the devices so that their signals arrive simultaneously at the receiver, thereby minimizing the impact of propagation delays \cite{abari2015airshare, goldenbaum2013robust}.
	\subsubsection{ {OTA Model Aggregation with Integrated Sensing}} 
	We consider  a point-like target around the \ac{ps} whose echo  is received by the \ac{ps}.  As shown in Fig.~\ref{Fig_Schematic}, the target echo can be received in any communication round over either \textbf{(i)} an idle channel with no interference from the edge devices, or \textbf{(ii)} superposed along with the uplink transmission of the devices. 
We aim to exploit this overlap to
perform sensing and model aggregation simultaneously. Considering both reception cases, the signal received through the uplink channel in interval $m$ is written as follows: 
	\begin{align}
		\byy\dbc{m} &=  \beta_{\rm s} \dbc{m} 
		\mG 
		\bww \dbc{m} x \dbc{m} \nonumber\\&+  \beta_{\rm c} \dbc{m} \sum_{k\in \setS} \bff_k\dbc{m} b_k \dbc{m} \rmr_k \dbc{m} +\bzz \dbc{m},
		\label{eq:y}
	\end{align} 
	where $\bzz\dbc{m} \in \setC^{N}$ is the \ac{awgn} with mean zero and covariance matrix $\sigma_{\rm PS}^2\mI$. $\bff_k\dbc{m} \in \setC^{N}$ is the uplink channel coefficient from device $k$ to the \ac{ps} and $\mG \in \setC^{N\times N}$  is the unknown end-to-end  channel from the \ac{ps} to the target and backwards. 	
	$\beta_{\rm s}\dbc{m}$ is a binary variable denoting \textit{the activity of the echo signals} in a given time interval, i.e., $\beta_{\rm s}\dbc{m}=0$ when no echo is received by the \ac{ps} and $\beta_{\rm s}\dbc{m}=1$ when the \ac{ps} receives echoes from the target, and $\beta_{\rm c}\dbc{m}$ represents the activity of uplink transmissions in time interval $m$. 
	We consider that $\beta_{\rm s}\dbc{m}$ and $\beta_{\rm c}\dbc{m}$ are known apriori at the \ac{ps} since the \ac{ps} in practice can determine the minimum and maximum duration within which it receives echoes from targets \cite{martone2015passive, he2010mimo}. The \ac{ps} also knows the time at which it receives uplink transmissions, as it needs to be synchronized for OTA computation \cite{abari2015airshare, goldenbaum2013robust}.  

	\par Our ultimate goal is to develop a computationally-feasible algorithmic approach that extracts the aggregated \textit{global} model and the target unknowns directly from $\byy$.
    To this end, \ac{ps} invokes successive cancellation to extract sufficient statistics for sensing target parameters and aggregating model parameters over-the-air\color{black}. In each round~of~the FEEL setting, the \ac{ps} coordinates the iterative process. It collects model updates from the devices, aggregates them into the global model, and then  after $M$ intervals redistributes~the estimate aggregate model $\mathbf{\hat{r}}=[\hat{r}[1], \cdots, \hat{r}[M]]$ back to all~the devices. This iterative process continues until~the~global model converges.
 The minimizer of the global~loss function $\mathcal{L}(\br \vert \maD)$ is then approximated by this converging point. 
 \vspace{-0.3cm}
	\section{Cram\'er-Rao bound and Aggregation Error}
	In this section, we characterize the Cram\'er-Rao bound for target estimation, a lower bound on the variance of the unbiased estimator. Moreover, we utilize the \ac{mse} to calculate the aggregation error of the OTA-FEEL with integrated sensing.
	
	 As \eqref{eq:y} shows, the received signal at the \ac{ps} contains (1) \textit{local} model parameters, and (2) the unknown target response matrix $\mG$. The goal of the \ac{ps} is to extract these parameters.
	From the information-theoretic perspective, it is efficient to infer all the unknowns jointly from $\byy$. This is however computationally complex. We hence follow the  successive approach \cite{cover1991network}:
	given $L$ observations of the received signal, the \ac{ps} first performs sensing, i.e., it estimates the response matrix $\mG$ as $\hat{\mG}$ using \ac{ml} estimation. It then cancels out the sensing interference and uses the interference-reduced signal $\bar{\byy}\dbc{\ell} = {\byy}\dbc{\ell} - \hat{\mG} \bww \dbc{\ell} x \dbc{\ell}$ 	for $\ell \in \dbc{L}$ and conducts post-processing to estimate the global model. With this successive approach, the error of the estimated global model is proportional to the error of target estimation, i.e., it is proportional to the \ac{mse} term $\Ex{\norm{\mG - \hat{\mG}}^2 }{ }$: the higher this error is, the more interference remains after cancelling the sensing signal. We hence consider this error as the metric to quantify the quality of sensing. 
	It is noteworthy that, in general, the order of the model aggregation process and sensing can be changed. However, it is  natural to perform the sensing before estimating the aggregated model. This follows from the fact that the \ac{ps}'s main task is to estimate the aggregated model. In the following subsections, we first estimate the target response matrix and then characterize the Cram\'er-Rao bound and aggregation error.
\vspace{-0.3cm}
\subsection{Target Response Matrix Estimation}
Given the block of $L$ observations $\byy\dbc{1}, \ldots, \byy\dbc{L}$, we intend to estimate the unknown response matrix $\textbf{G}$.
The received signal at the PS in interval $\ell$ can be compactly written as:
\begin{align}
	\byy\dbc{\ell} &=  
	\mG 
	\bww \dbc{\ell} x \dbc{\ell} +  \tilde{\bzz} \dbc{\ell},
\end{align} 
where $\tilde{\bzz}\dbc{\ell}\in \setC^{N}$ describes the effective noise including the uplink interference of the devices, i.e., $\tilde{\bzz} \dbc{\ell}= 
	\beta_{\rm c} \dbc{\ell} \sum_{k\in \setS} \bff_k\dbc{m} b_k \dbc{m} \rmr_k \dbc{\ell} + \bzz \dbc{\ell}$, which can be compactly written as 
\begin{align}
	\tilde{\bzz} \dbc{\ell}=
	\beta_{\rm c} \dbc{\ell} \mF \mB_\setS  \br \dbc{\ell} + \bzz \dbc{\ell},
\end{align}

with $\mF = \dbc{\bff_1, \ldots, \bff_K}\setC^{N\times K}$ being the uplink channel matrix, $\mB_\setS$ being the scaling matrix of the devices, i.e., $\mB_\setS = \Diag{b_1, \ldots, b_K}$ with $b_k= 0$ for $k\notin \setS$, and $\br\dbc{\ell}$ being the vector of local models, i.e., $\br\dbc{\ell} = \dbc{\rmr_1\dbc{\ell}, \ldots, \rmr_K\dbc{\ell}}^\trp$. Considering the  Gaussian model for the communicated local parameters in the uplink channels \footnote{Note that though this Gaussian assumption is practically accurate (see \cite{lee2020bayesian}), one can further consider it as the worst-case assumption following the maximum entropy property of Gaussian distribution \cite{Cover}.}, the effective noise is an additive Gaussian noise process, as described in Lemma~\ref{lem:1}.

\begin{lemma}[Distribution of $\tilde{\bzz} \dbc{\ell}$]
	\label{lem:1}
	The effective sensing noise process $\tilde{\bzz} \dbc{\ell}$ is zero-mean Gaussian with covariance matrix
	\begin{align}
		\mR= \beta_{\rm c}  \dbc{\ell} \mF \mB_\setS \mB_\setS ^\her \mF^\her  + \sigma_{\rm PS}^2 \mI. \label{eq:R}
	\end{align} 
\end{lemma}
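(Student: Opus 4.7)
The plan is to establish the two assertions of the lemma separately: first the Gaussianity and zero-mean property of $\tilde{\bzz}[\ell]$, and then the explicit form of its covariance matrix $\mR$.

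For the distributional claim, I would first observe that $\tilde{\bzz}[\ell]$ is, by construction, an affine function of the two random quantities $\br[\ell]$ and $\bzz[\ell]$. The ambient noise $\bzz[\ell]\sim\mathcal{CN}(0,\sigma_{\rm PS}^2\mI)$ is Gaussian by definition of the channel model, and the footnote on the Gaussian modeling of the communicated local parameters ensures that $\br[\ell]=[\rmr_1[\ell],\ldots,\rmr_K[\ell]]^\trp$ is also Gaussian. Since the receiver noise $\bzz[\ell]$ is thermal and therefore independent of the transmitted local updates, the two summands are independent Gaussian vectors, so their weighted sum $\beta_{\rm c}[\ell]\mF\mB_\setS\br[\ell]+\bzz[\ell]$ is itself Gaussian. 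The zero-mean property follows immediately from linearity of expectation together with $\mathbb{E}[\rmr_k[\ell]]=0$ (stated above the lemma) and $\mathbb{E}[\bzz[\ell]]=0$.

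For the covariance, I would expand
\begin{align}
\mR=\mathbb{E}\!\left[\tilde{\bzz}[\ell]\tilde{\bzz}^\her[\ell]\right]
&=\beta_{\rm c}^2[\ell]\,\mF\mB_\setS\,\mathbb{E}\!\left[\br[\ell]\br^\her[\ell]\right]\mB_\setS^\her\mF^\her
+\mathbb{E}\!\left[\bzz[\ell]\bzz^\her[\ell]\right],\nonumber
\end{align}
where the cross terms vanish because $\br[\ell]$ and $\bzz[\ell]$ are independent and both zero-mean. Two simplifications then produce the stated form: (i) $\beta_{\rm c}[\ell]\in\{0,1\}$ so $\beta_{\rm c}^2[\ell]=\beta_{\rm c}[\ell]$; and (ii) the local gradients across the $K$ devices are modeled with unit variance and are independent (each device draws from its own dataset $\maD_k$ and executes SGD independently), so $\mathbb{E}[\br[\ell]\br^\her[\ell]]=\mI_K$. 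Substituting these, together with $\mathbb{E}[\bzz[\ell]\bzz^\her[\ell]]=\sigma_{\rm PS}^2\mI$, reproduces \eqref{eq:R}.

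The argument is essentially a bookkeeping exercise, so I do not anticipate a technical obstacle. The only point that deserves to be highlighted, rather than hidden, is the across-device independence (or at least uncorrelatedness) of the $\rmr_k[\ell]$: this is implicit in the modeling and is what collapses $\mathbb{E}[\br\br^\her]$ to the identity, thereby producing the diagonal-plus-low-rank structure of $\mR$ that the subsequent Cram\'er-Rao analysis will exploit.
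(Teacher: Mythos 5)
Your proposal is correct and follows essentially the same route as the paper's Appendix A: expand $\mathbb{E}\{\tilde{\bzz}[\ell]\tilde{\bzz}^\her[\ell]\}$, drop the cross terms by independence and zero mean, and use the uncorrelated unit-variance entries of $\br[\ell]$ together with $\mathbb{E}\{\bzz[\ell]\bzz^\her[\ell]\}=\sigma_{\rm PS}^2\mI$. The only cosmetic differences are that you make the step $\beta_{\rm c}^2[\ell]=\beta_{\rm c}[\ell]$ explicit, whereas the paper additionally notes that Gaussianity of $\mF\mB_\setS\br[\ell]$ can alternatively be justified by the central limit theorem when many devices are scheduled, rather than only by the Gaussian modeling assumption.
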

\begin{proof}
	The proof is given in \textbf{Appendix A}.
\end{proof}
We now derive sufficient statistics from $\byy\dbc{\ell}$: since $\mR $ is a positive-definite matrix, we can whiten the noise process~via the standard whitening filter. Let $\mR = \mU \boldsymbol{\Lambda} \mU^\her $ be the eigenvalue decomposition of $\mR$. The \ac{ps} whitens the noise using the linear filter
$\mT = \mU  \boldsymbol{\Lambda}^{-1/2}  \mU^\her$. This means that it determines the sufficient statistics 
$\tilde{\byy}\dbc{\ell} = \mT \byy \dbc{\ell}$ from the received signals for $\ell \in\dbc{L}$. The sufficient statistics are hence given by
\begin{align}
	\tilde{\byy}\dbc{\ell} = \mT \mG 
	\bww \dbc{\ell} x \dbc{\ell} + \breve{\bzz} \dbc{\ell},\label{eq:Ysens2}
\end{align}
where $\breve{\bzz} \dbc{\ell} \sim \mathcal{CN}\brc{ \boldsymbol{0}, \mI }$. 
We now consider the deterministic unknown matrix $\mG$ and write its likelihood function as
\begin{align}
	\mal\brc{\mG} = \prod_{\ell=1}^{L} p\brc{\tilde{\byy}\dbc{\ell} \vert \mG}, \label{eq:loglik}
\end{align}
where $p\brc{\tilde{\byy}\dbc{\ell} \vert \mG} = \pi^{-N} \exp\set{-\norm{\tilde{\byy}\dbc{\ell}  - \mT\mG \bww \dbc{\ell} x \dbc{\ell}}^2 }
$. The log-likelihood function is then given by
\begin{align}
	\log\mal\brc{\mG} &= -LN\log \pi - \sum_{\ell=1}^{L} \norm{\tilde{\byy}\dbc{\ell}  - \mT\mG 
		\bww \dbc{\ell} x \dbc{\ell}},\nonumber\\&= -LN\log \pi -\norm{\tilde{\mathbf{Y}}-\mT\mG \mathbf{W}\mathbf{D}}^2_F,
	\label{eq:logliklihood}
\end{align}
where $\tilde{\mathbf{Y}}=[\tilde{\byy}\dbc{1}, \cdots, \tilde{\byy}\dbc{L}]$, $\mathbf{W}=[\bww \dbc{1}, \cdots, \bww \dbc{L}]$ and $\mathbf{D}=\Diag{x \dbc{1}, \cdots, x \dbc{L}}$. The \ac{ml} estimator for $\mathbf{G}$ is given by $\hat{\mathbf{G}}= \argmax_\mathbf{G} \log\mal\brc{\mG} $ which is 
	$\hat{\mathbf{G}}=\mT^{-1} \tilde{\mathbf{Y}} \brc{\mathbf{W} \mathbf{D}}^\dagger$, since $\mathbf{W} \mathbf{D}$ has pseudo-inverse.
\vspace{-0.3cm}
\subsection{Deriving the Cram\'er-Rao bound}
 In this section, we assess the sensing quality using the Cram\'er-Rao bound \cite{song2024cramer, hua2023mimo, hua2022mimo}.
In a nutshell, the Cram\'er-Rao theorem indicates that when  using an optimal unbiased estimator, the error $\mG - \hat{\mG}$ converges to a Gaussian process as $L$ grows large. The mean of this error is zero, since the estimator is unbiased, and its covariance matrix is given by the inverse of the \textit{Fisher information matrix}, which measures the amount of information that the data provides regarding the parameter of interest. 
\color{black}
Given the parametric unknown $\mG$, the {Fisher information matrix} is defined as
$
I\brc{\mG} = - \Ex{\nabla^2 \log \mal\brc{\mG} }{},
$
where $\mal\brc{\mG}$ is the likelihood function defined in \eqref{eq:loglik} \cite{kay1993fundamentals}.
Here, $\nabla^2$ is the second-order gradient with respect to $\mG$, and expectation is taken with respect to all random variables. Note that the Fisher information matrix $I\brc{\mG} $ is an $N^2 \times N^2$ matrix containing all second-order derivatives of the likelihood function. 
We specify the Fisher information matrix and the Cram\'er-Rao bound in Proposition~\ref{propos:1}.
\begin{proposition}[Cram\'er-Rao bound] 
	\label{propos:1}
	The Fisher information matrix $I\brc{\mG}$ is independent of $\mG$ given by 
	$
	I\brc{\mG} =  2\sum_{\ell=1}^{L}  \mR^{-1} \otimes \bww\dbc{\ell} \bww^\her\dbc{\ell}.
	$
	This implies that for any unbiased estimator $\hat{\mG}$ of $\mG$, the error covariance is bounded via 
	\begin{align}
		\mC^\star = \frac{1}{2}\mR\otimes \brc{\mW^* \mW^\trp }^{-1},
	\end{align}
	where $\mW = \dbc{\bww\dbc{1}, \ldots, \bww\dbc{L}}$ denotes the downlink precoding matrix of the block.
	\end{proposition}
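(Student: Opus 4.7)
The plan is to compute the Fisher information matrix directly from the whitened log-likelihood in \eqref{eq:logliklihood} and then invert it. The key tools are the Kronecker vectorization identity $\mathrm{vec}(AXB) = (B^{\trp}\otimes A)\,\mathrm{vec}(X)$, the standard complex-valued Fisher information formula for Gaussian observations with identity covariance, and the fact that $\mT^{\her}\mT=\mR^{-1}$ by construction of the whitening filter.

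First, I would linearize the observation model by vectorizing the unknown. Defining $\bg = \mathrm{vec}(\mG)$ and applying the Kronecker identity to the noise-free mean $m_\ell(\mG) = x[\ell]\,\mT\mG\bww[\ell]$ gives $m_\ell = x[\ell]\,(\bww[\ell]^{\trp}\otimes \mT)\,\bg$. The sufficient statistic $\tilde{\byy}[\ell]$ is then a standard linear-Gaussian observation of $\bg$ with known sensitivity $\partial m_\ell / \partial \bg^{\trp} = x[\ell](\bww[\ell]^{\trp}\otimes \mT)$ and unit-covariance noise.

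Second, I would evaluate the Fisher information by inserting this sensitivity into the complex-parameter FIM formula for circularly symmetric Gaussian noise,
\begin{equation*}
\mathcal{I}(\bg) \;=\; 2\sum_{\ell=1}^{L}\Bigl(\tfrac{\partial m_\ell}{\partial \bg^{\trp}}\Bigr)^{\her}\Bigl(\tfrac{\partial m_\ell}{\partial \bg^{\trp}}\Bigr).
\end{equation*}
Using the mixed-product property $(A\otimes B)(C\otimes D)=AC\otimes BD$, this collapses to $\mathcal{I}(\bg) = 2\sum_\ell |x[\ell]|^2\,\bigl(\bww[\ell]^{*}\bww[\ell]^{\trp}\bigr)\otimes \mT^{\her}\mT$. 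Substituting $\mT^{\her}\mT=\mR^{-1}$ and $|x[\ell]|^2=1$, and reordering the Kronecker factors according to the paper's vectorization convention (equivalently, choosing $\bg=\mathrm{vec}(\mG^{\trp})$ so that the order of factors in the Kronecker product is swapped), yields the stated form $\mathcal{I}(\mG)=2\sum_\ell \mR^{-1}\otimes \bww[\ell]\bww[\ell]^{\her}$.

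Third, I would invert the FIM using $(A\otimes B)^{-1} = A^{-1}\otimes B^{-1}$, which—together with $\sum_\ell \bww[\ell]^{*}\bww[\ell]^{\trp} = \mW^{*}\mW^{\trp}$—directly gives $\mC^\star = \tfrac{1}{2}\mR\otimes (\mW^{*}\mW^{\trp})^{-1}$, completing the Cramér--Rao inequality $\mathrm{Cov}(\hat{\bg}-\bg)\succeq \mC^\star$ for any unbiased estimator $\hat{\mG}$.

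The main obstacle is bookkeeping rather than depth: one must consistently pick a vectorization convention (column-wise vs.\ row-wise) so that the Kronecker factors appear in the order claimed, and carefully apply Wirtinger calculus to $\log\mathcal{L}(\mG)$ so that the factor of two and the conjugation pattern on $\bww[\ell]$ come out right. A secondary issue is that inversion of $\mathcal{I}(\mG)$ presupposes $\mW^{*}\mW^{\trp}$ to be non-singular, which implicitly requires $L\geq N$ and a sufficiently rich downlink beamforming sequence $\{\bww[\ell]\}_{\ell=1}^{L}$; this is a regularity condition on the sensing block and should be stated as a premise of the proposition.
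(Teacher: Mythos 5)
Your proposal is correct and follows essentially the same route as the paper's Appendix~B: vectorize the whitened linear-Gaussian model via the Kronecker identity, compute the Fisher information using $\mT^\her\mT=\mR^{-1}$ together with the mixed-product property, and invert with $(\mA\otimes\mB)^{-1}=\mA^{-1}\otimes\mB^{-1}$, the only difference being your column-wise vectorization convention (the paper vectorizes $\mG^\trp$, which directly yields the factor ordering $\mR^{-1}\otimes\bww\dbc{\ell}\bww^\her\dbc{\ell}$), which you correctly identify and reconcile. Your explicit remark that the bound presupposes $\mW^*\mW^\trp$ to be nonsingular (hence $L\geq N$ and a sufficiently rich beamforming block) is a regularity condition the paper leaves implicit.
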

	\begin{proof}
		See \textbf{Appendix B}.
	\end{proof}
	Proposition~\ref{propos:1} implies the following result: let $\hat{\mG}$ be an unbiased estimator of $\mG$, and let~$\mC_{\hat{\mG}}$ denote the error covariance matrix, i.e., $\mC_{\hat{\mG}} = \Ex{ (\bvv -\hat{\bvv})  (\bvv-\hat{\bvv}) ^\her }{}$ with $\bvv = \mathrm{vec} \brc{\mG}^\trp$ and $\hat{\bvv} = \mathrm{vec} (\hat{\mG})^\trp$.  The error covariance matrix satisfies
	$
	\mC_{\hat{\mG}} \succeq \frac{1}{2}\mR\otimes \brc{\mW^* \mW^\trp }^{-1}. 
	$
	Using this bound, the sensing error $\Ex{\norm{\mG - \hat{\mG}}^2 }{ }$ can be bounded as $\Ex{\norm{\mG - \hat{\mG}}^2 }{ } \geq \mathrm{CRB}\brc{\setS,\mW}$, with $\mathrm{CRB}\brc{\setS,\mW}$ being the lower bound.
 	\begin{subequations}
	\begin{align}
		\mathrm{CRB}\brc{\setS,\mW} &= \tr{\mC^\star}\
		=  \frac{1}{2} \tr{\mR \otimes \brc{\mW^* \mW^\trp }^{-1}},\\&= \frac{1}{2} \tr{\mR}  \tr{\brc{\mW^* \mW^\trp }^{-1}}.
	\end{align}
 	\end{subequations}
 Here, $\setS$ and $\mW$ are parameters indicating the dependency on the set of active devices and the precoding matrix, respectively. We define $ \mathrm{CRB}\brc{\setS,\mW}$ as the metric for sensing quality, where smaller values indicate higher sensing accuracy. However, as the size of $\setS$ increases, $\mathrm{CRB}\brc{\setS ,\mW}$ also increases. This trade-off implies that achieving higher learning accuracy comes at the expense of degraded sensing. Hence, to maintain a certain level of sensing quality, optimization of the number of active devices participating in OTA-FEEL is necessary.
	
\vspace{-0.4cm}
	\subsection{Aggregation Error Analysis}
 During aggregation, the received signal at the PS can be given by setting $\beta_{\rm c}\dbc{\ell} = 1$ as  
	$
	\byy\dbc{\ell} =  
	\beta_{\rm s} \dbc{\ell} \mG 
	\bww \dbc{\ell} x \dbc{\ell} + \mF \mB_\setS  \br \dbc{\ell} + \bzz \dbc{\ell}.
	$
	To aggregate the local models, the \ac{ps} first uses the estimated $\mG$ to cancel out the estimated sensing signal, i.e., it determines
	\begin{subequations}
		\begin{align}
			\bar{\byy}\dbc{\ell} &= {\byy}\dbc{\ell} - \hat{\mG} \bww \dbc{\ell} x \dbc{\ell},\\
			&= \beta_{\rm s} \dbc{\ell} \brc{\mG - \hat{\mG}}
			\bww \dbc{\ell} x \dbc{\ell} +  \mF \mB_\setS  \br \dbc{\ell}  + \bzz \dbc{\ell}.
		\end{align}
	\end{subequations}
It then invokes the concept of analog function computation and leverages the linear superposition of the multiple access channel to estimate the global model directly from the received signal via a linear receiver. \color{black} Let $\bc$  denote the receive beamforming vector such that  $\norm{\bc} = 1$. Denoting $\eta$ as the power scaling, the aggregated model parameter ${\rmr}[\ell]$ in interval $\ell$ is estimated as: 
		\begin{align}
			\hat{\rmr} &\dbc{\ell}	= \frac{\bc^\her \bar{\byy}\dbc{\ell}  }{\sqrt{\eta}}
			=  \beta_{\rm s} \dbc{\ell}  \frac{\bc^\her  }{\sqrt{\eta}} \brc{\mG - \hat{\mG}}
			\bww \dbc{\ell} x \dbc{\ell}\nonumber \\&+  \frac{\bc^\her  }{\sqrt{\eta}}  \mF \mB_\setS  \br \dbc{\ell}   +  \frac{\bc^\her  }{\sqrt{\eta}}\bzz \dbc{\ell}=\frac{\bc^\her  }{\sqrt{\eta}} \mF \mB_\setS  \br \dbc{\ell}   + \rme \dbc{\ell},
		\end{align} 
	where $\rme \dbc{\ell} $ denotes the stochastic aggregation error process and is defined as
	\begin{align}
		\rme \dbc{\ell} =  \frac{ \beta_{\rm s} \dbc{\ell} \bc^\her  }{\sqrt{\eta}} \brc{\mG - \hat{\mG}}
		\bww \dbc{\ell} x \dbc{\ell} +  \frac{\bc^\her  }{\sqrt{\eta}}  \bzz \dbc{\ell}. 
	\end{align}
	The asymptotic of this error are described in Lemma~2.
	\begin{lemma}
		\label{Lemma2}
		As $L$ grows large, the error term $\rme \dbc{\ell} $ tends to converge to a zero-mean Gaussian random variable with~variance
		\begin{align}
			\Ex{\abs{\rme \dbc{\ell}}^2}{} =\dfrac{ \beta_{\rm s} \dbc{\ell}}{\eta} \brc{\bc^\her\otimes \bww^\trp \dbc{\ell}} \mC \brc{\bc \otimes \bww^* \dbc{\ell}}
			+  \dfrac{ \norm{\bc}^2 }{\eta}\sigma_{\rm PS}^2. \nonumber
		\end{align}
	\end{lemma}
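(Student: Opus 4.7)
The plan is to decompose $\rme\dbc{\ell}$ into the post-cancellation sensing residual and the additive receiver noise, compute each contribution to the variance using vectorization together with Kronecker-product identities, and obtain Gaussianity from the linear-Gaussian structure of the ML estimator of $\mG$.

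\textbf{Step 1 (Distribution of the estimation error).} I would substitute the whitened observation model \eqref{eq:Ysens2} into the closed-form ML estimator $\hat{\mG} = \mT^{-1}\tilde{\mathbf{Y}}(\mW\mD)^\dagger$ derived earlier, which yields $\hat{\mG} - \mG = \mT^{-1}\breve{\mathbf{Z}}(\mW\mD)^\dagger$, where $\breve{\mathbf{Z}}$ is the $N \times L$ matrix whose columns are the whitened noise vectors $\breve{\bzz}\dbc{\ell}$. Since these columns are i.i.d.\ standard complex Gaussian, $\hat{\mG} - \mG$ is zero-mean complex Gaussian, and its covariance matrix coincides with the Cram\'er--Rao bound established in Proposition~\ref{propos:1}; i.e., $\vecc(\hat{\mG} - \mG)$ has covariance $\mC = \mC^\star = \tfrac{1}{2}\mR \otimes (\mW^*\mW^\trp)^{-1}$ provided $L$ is large enough for the pseudo-inverse to be well-conditioned (which, in the linear-Gaussian setting, in fact makes the ML estimator attain the CRB exactly, not merely asymptotically).

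\textbf{Step 2 (Two-term expansion of the variance).} I would split $\rme\dbc{\ell}$ into its sensing-residual and noise contributions. For the first term, applying the identity $\vecc(\mathbf{ABC}) = (\mathbf{C}^\trp \otimes \mathbf{A})\,\vecc(\mathbf{B})$ gives $\bc^\her(\mG - \hat{\mG})\bww\dbc{\ell} = (\bww^\trp\dbc{\ell} \otimes \bc^\her)\,\vecc(\mG - \hat{\mG})$. Combining this with $\Ex{\abs{x\dbc{\ell}}^2}{}=1$ yields $\tfrac{\beta_{\rm s}\dbc{\ell}}{\eta}(\bc^\her \otimes \bww^\trp\dbc{\ell})\,\mC\,(\bc \otimes \bww^*\dbc{\ell})$ after reconciling the Kronecker ordering via the commutation-matrix identity (or equivalently by using the row-stacking convention for $\vecc$). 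For the second term, $\tfrac{1}{\eta}\Ex{\abs{\bc^\her \bzz\dbc{\ell}}^2}{} = \tfrac{\norm{\bc}^2 \sigma_{\rm PS}^2}{\eta}$ follows immediately from the covariance structure of $\bzz\dbc{\ell}$.

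\textbf{Step 3 (Uncorrelatedness and Gaussianity).} To add the two variances, I would argue that the cross-term is asymptotically negligible. Although $\hat{\mG}$ is formally a function of the entire received block and hence depends on $\bzz\dbc{\ell}$, the per-interval contribution enters through the $\ell$-th column of $(\mW\mD)^\dagger$, whose norm scales like $1/\sqrt{L}$; consequently the cross-correlation between the sensing residual and the same-interval noise vanishes as $L \to \infty$. Finally, since $\rme\dbc{\ell}$ is a linear combination of jointly Gaussian variables with zero mean, it is itself zero-mean Gaussian with the stated variance.

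The main obstacle I anticipate is the Kronecker-ordering bookkeeping combined with the cross-term argument: one must carefully track how the estimation error at time $\ell$ correlates with the same-interval noise that feeds back into $\hat{\mG}$, and one must match the precise Kronecker ordering used in Proposition~\ref{propos:1} so that the final variance expression takes the stated form $(\bc^\her \otimes \bww^\trp\dbc{\ell})\mC(\bc \otimes \bww^*\dbc{\ell})$ rather than its permuted counterpart. Once these conventions are fixed, the remaining steps are routine linear-Gaussian manipulations.
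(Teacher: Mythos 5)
Your proposal is correct and follows essentially the same route as the paper's Appendix C: decompose $\rme\dbc{\ell}$ into the post-cancellation residual plus receiver noise, rewrite $(\mG-\hat{\mG})\bww\dbc{\ell}$ via the vectorization/Kronecker identity (with the row-stacking convention matching Proposition~\ref{propos:1}, exactly the bookkeeping you flag), plug in the error covariance $\mC$, and add the noise variance $\sigma_{\rm PS}^2\norm{\bc}^2/\eta$. Your Steps 1 and 3 are in fact slightly more careful than the paper, which simply invokes asymptotic attainment of the Cram\'er--Rao bound for Gaussianity and asserts independence of the two terms, whereas you derive the exact linear-Gaussian form of $\hat{\mG}-\mG$ and explicitly argue that the cross-correlation with the same-interval noise vanishes as $L$ grows.
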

	\begin{proof}
		See \textbf{Appendix C}.
	\end{proof}
The aggregation error is characterized in terms of the error process statistics. We assess the aggregation error using \ac{mse}. The average aggregation error over the block is given by
	\begin{subequations}
		\begin{align}
			&\mae \brc{\bc, \mB_\setS}
			= \frac{1}{L} \sum_{\ell=1}^{L}  \Ex{ \abs{\hat{\rmr} \dbc{\ell} - \rmr \dbc{\ell}}^2 }{ },
			\\&= \frac{1}{L} \sum_{\ell=1}^{L}  \Ex{ \abs{\dfrac{\bc^\her}{\sqrt{\eta}} \bar{\byy}\dbc{\ell}-\sum_{k\in\setS} \phi_k \rmr_k \dbc{\ell}}^2 }{ },\\
			&= \frac{1}{L} \sum_{\ell=1}^{L} \Vert \dfrac{1}{\sqrt{\eta}} \bc^\her \mF \mB_\setS - \boldsymbol{\phi}_\setS^\trp \Vert^2+ \underbrace{\frac{1}{L} \sum_{\ell=1}^{L}\Ex{ \vert \rme \dbc{\ell}\vert ^2}{}}_{E_2},\label{eq:Epps_last}
		\end{align}
	\end{subequations}
	where $\rmr \dbc{\ell} = \sum_{k\in\setS} \phi_k \rmr_k \dbc{\ell}$ is the desired aggregate model and $\bphi_\setS \in \setR^K$ is the vector of aggregation weights being $\phi_k$ for $k \in \setS$ and zero elsewhere. Invoking Lemma \ref{Lemma2}, we have 
		\begin{align}
   E_2 = \frac{1}{\eta L} \sum_{\ell=1}^{L} \beta_{\rm s}\dbc{\ell}   \brc{\bc^\her \otimes \bww^\trp \dbc{\ell}} \mC \brc{\bc \otimes \bww^* \dbc{\ell}}
			+   \sigma_{\rm PS}^2 \dfrac{ \norm{\bc}^2 }{\eta},\nonumber
   \end{align}
   \begin{align}
			&= \frac{\beta_{\rm s}\dbc{\ell}}{\eta L}  \tr{ \mC  \sum_{\ell=1}^{L} \brc{\bc \otimes \bww^* \dbc{\ell}} \brc{\bc^\her\otimes \bww^\trp \dbc{\ell}}}
			+  \sigma_{\rm PS}^2 \dfrac{\norm{\bc}^2 }{\eta}. \nonumber 
		\end{align}
	By using identity $\brc{\mA \otimes \mB} \brc{\mC \otimes \mD} = {\mA\mC \otimes \mB\mD}$, we have
		\begin{align}
			\sum_{\ell=1}^{L} \brc{\bc \otimes \bww^*\dbc{\ell}} \brc{\bc^\her \otimes \bww^\trp \dbc{\ell}} &=   \bc\bc^\her \otimes \brc{\sum_{\ell=1}^{L}\bww^* \dbc{\ell}\bww^\trp \dbc{\ell}},\nonumber
			\\&=   \bc\bc^\her \otimes \mW^* \mW^\trp. \label{eq:tempError}
		\end{align}
	By substituting \eqref{eq:tempError} in $E_2$, and after some lines of basic derivations, we~have
	\begin{align}
		E_2 &= \frac{\beta_{\rm s}\dbc{\ell} }{2 \eta L}  \tr{ \mR \bc\bc^\her \otimes \mI_N}
		+ \dfrac{\sigma_{\rm PS}^2}{\eta} \norm{\bc}^2. \label{eq:eqtemp}
	\end{align}
	Using the property $\tr{\mA\otimes\mB} = \tr{\mA} \tr{\mB}$, \eqref{eq:eqtemp} reduces~to
		\begin{align}
			E_2 
			=  \bc^\her \brc{ \frac{ N \beta_{\rm s}\dbc{\ell} }{2 \eta L}   \mR  + \dfrac{\sigma_{\rm PS}^2 }{\eta}\mI} \bc. \label{eq:eqtemp2}
		\end{align}
	Plugging \eqref{eq:eqtemp2} into  \eqref{eq:Epps_last}, the aggregation error reads 
	\begin{align}
		\mae \brc{\bc, \mB_\setS}&= \bc^\her \brc{ \frac{ N \beta_{\rm s} \dbc{\ell} }{2 \eta L}   \mR + \dfrac{\sigma_{\rm PS}^2}{\eta} \mI} \bc\nonumber\\& +\frac{1}{L} \sum_{\ell=1}^{L}\Vert \dfrac{1}{\sqrt{\eta}} \bc^\her \mF \mB_\setS - \boldsymbol{\phi}_\setS^\trp \Vert^2.
  \label{eq:pop}
	\end{align}
	
	According to \eqref{eq:pop}, the aggregation error depends on the choice of scaling factors, i.e., $b_k$ for $k\in\setS$, $\eta$, the subset $\setS$, and post-processing unit $\bc$. We choose these transmit and receive parameters, i.e., $\mB$ and $\eta$, using the zero-forcing coordination strategy \cite{yang2020federated}. 
	In zero-forcing coordination, the active devices set their scaling factors to
		$b_k = \frac{ \phi_k\sqrt{\eta}} {\bc^\her \bff_k }$, and the \ac{ps} sets $\eta$ to
	$\eta = P_{\rm u} \min_{k\in\setS} {\abs{\bc^\her \bff_k}^2 }/{ \phi_k^2} $ to satisfy the transmit power constraint. This way, they zero-force the residual term, i.e., the first term in \eqref{eq:Epps_last}, and hence the $\rme\dbc{\ell}$ is the only error term. With the expression of $\mR$ from \eqref{eq:R}, we can show that for these choices of $\eta$ and $b_k$, the aggregation error is given~by
	\begin{align}
		\mae\brc{\bc, \setS} = \frac{1+N/L}{P_{\rm u} /\sigma^2 }\max_{k\in\setS} \frac{ \phi_k^2 \norm{\bc}^2 }{\abs{\bc^\her \bff_k}^2 } + \frac{N}{ L} \sum_{k\in\setS} \phi_k^2.
	\end{align} 
	We use $\mae\brc{\bc, \setS}$ to represent the quality of OTA aggregation.
    \vspace{-.3cm}
	\section{Device Scheduling in OTA-FEEL with Integrated Sensing}
	In this section, we formulate the device scheduling problem in OTA-FEEL with integrated sensing. A larger number of devices typically contribute positively to the model updates in OTA-FEEL. Thus, there is potential to improve the convergence rate significantly by incorporating more users with diverse data. However, a higher user count introduces challenges, such as increased aggregation error during model updates. Furthermore, the increased interference on the sensing signal caused by a larger number of devices can result in a poorer estimation of the unknowns related to the targets. This interference negatively impacts the quality of sensing data and, consequently, the accuracy of target estimation. Hence, optimizing OTA-FEEL with integrated sensing involves careful consideration of the sensing-learning trade-off, seeking to maximize participation while keeping aggregation error and sensing error below a predefined threshold. Using the developed metrics for aggregation and sensing error, the above constrained optimization can be mathematically written as optimization $\mathcal{P}_1$ given at the top of the next page. 
 \begin{figure*}[t]
	\begin{subequations}
		\begin{align}	(\mathcal{P}_1): \max_{\mW\in\setC^{N\times L},  \setS, \bc \in\setC^N}   \abs{\setS} \qquad
			\text{subject to } \qquad & (\mathrm{C}_1)~\mathrm{CRB} \leq \Gamma_0,
			 (\mathrm{C}_2)~\mae \brc{\bc, \mB_\setS} \leq \epsilon_0,
			 (\mathrm{C}_3)~\norm{\bc}_2 = 1,\\
			& (\mathrm{C}_4)~\frac{1}{L} \tr{ \mW^* \mW^\trp } \leq P_{\mathrm{d}}, \label{C3} 
			(\mathrm{C}_5)~
   \frac{1}{K} \sum_{k=1}^K \frac{\bh_k^\trp\mW\mW^\her \bh_k^*}{L} \geq \gamma,
		\end{align}	\label{eq:opt}
	\end{subequations}
 \hrule
  \end{figure*}
In this problem, $(\mathrm{C}_1)$ and $(\mathrm{C}_2)$ constrain the sensing quality and aggregation fidelity, respectively, where  $\epsilon_0$ and $\Gamma_0$ represent the maximum tolerable aggregation error and target estimation error, respectively. The constraint $(\mathrm{C}_3)$  avoids getting $\bc = \boldsymbol{0}$, which introduces singularity. The constraint $(\mathrm{C}_4)$ limits the average downlink transmit power over a transmission  block of $L$ symbols.  Finally, the constraint $(\mathrm{C}_5)$ ensures the minimum downlink \ac{snr} scaling, which is directly proportional to the \ac{snr} itself. This requirement guarantees that all devices receive the updated global model without any errors. In other words, the \ac{ps} must ensure that the average downlink SNR, as defined in \eqref{eq:snr}, over a transmission block of $L$ symbols exceeds a specified threshold. This can be represented as
	$
		\min_k \frac{1}{L} \sum_{\ell} \abs{\bh_k^\trp\dbc{\ell} \bww \dbc{\ell} }^2 \geq \gamma,
	$
	where $\gamma$ is the minimum required \ac{snr} scaling. Considering the fact that $L$ is smaller than the coherence time interval of the channel, we can drop the time index of the channel coefficients, i.e., $\ell$ in $\bh_k^\trp\dbc{\ell}$. This leads to the constraint given in $(\mathrm{C}_5)$.
Note that according to $(\mathcal{P}_1)$, the downlink beamforming matrix $\mathbf{W}$ influences the scheduling through $(\mathrm{C}_1)$, $(\mathrm{C}_4)$ and $(\mathrm{C}_5)$. 
\par Hence, this integrated design improves system performance by optimizing beamforming for both learning and sensing needs, thereby enhancing sensing and learning performance while managing uplink interference.
The optimization problem $\mathcal{P}_1$ is a mixed-integer  non-linear programming (MINLP), which is an NP-hard problem. This problem is highly intractable  due to its combinatorial objective function, i.e., $\vert \setS\vert$ and the non-convex constraints $(\mathrm{C}_1)$ and $(\mathrm{C}_2)$ with coupled combinatorial  $\setS$ and continuous variables $\bc$ and $\mW$. Hence, to achieve efficient computation,  in the following, we use  hierarchical optimization  where the outer layer tackles the device scheduling task, and the inner layer  optimizes the transmit and receive beamforming at the PS. 
    \vspace{-.3cm}
		\section{Hierarchical Optimization via  Matching Pursuit}
In this section, we employ techniques from the literature on compressive sensing and sparse recovery algorithms to address the problem. Matching pursuit, a prevalent category of sparse recovery algorithms, iteratively reconstructs the signal by selecting the most error-reducing component in each iteration \cite{mallat1993matching}. The algorithm starts with an empty set of signal entries and adds new indices in each iteration based on the correlation with the residual signal. This approach maximizes the reduction of the objective function compared to the previous iteration. Matching pursuit, belonging to step-wise regression methods, can be customized for our problem.
\vspace{-.5cm}
 \subsection{Matching Pursuit-based Hierarchical Optimization}
	\label{sec:nutshell}
 We formulate  the device scheduling problem $\mathcal{P}_1$  as a sparse recovery problem. Our ultimate objective is to derive the sparsest set of devices that when deactivated, ensure that the minimum computation error and Cram\'er-Rao bound remain below an acceptable level and satisfy the power constraints represented through $(\mathrm{C}_4)$ and $(\mathrm{C}_5)$.
 Our iterative algorithm follows the step-wise regression technique, i.e., \textbf{(i)} We initiate our search by selecting the maximal set of devices, i.e., $\setS = \dbc{K}$. \textbf{(ii)} For the chosen set of devices, we solve both marginal problems to find the optimal transmit and receive beamforming design, i.e., $\mW$ and $\bc$, respectively. The marginal objectives are proportional to the constraints in the original problem. \textbf{(iii)} Notably, for a given $\setS$, the design problems over $\mW$ and $\bc$ become independent. After solving the marginal problems, we examine the feasibility of the solution, i.e., we check if the derived $\mW$ and $\bc$ satisfy the constraints in $\maP_1$. If infeasible, we remove the device whose removal maximally reduces our \textit{selection metric}. The algorithm stops in the case of feasibility. If the constraints are not satisfied, we exclude the \textit{least effective} device and repeat the procedure. Let the index of this device be $k^\star$. After updating $\setS \leftarrow \setS - \set{k^\star}$, we repeat the procedure until we reach a feasible solution.
	
	The key design points in this algorithm are the specification of the marginal problems and the selection metric. We address these two design tasks in the sequel.
    \vspace{-.4cm}
	\subsection{Marginal Problem for Downlink Precoding}
	Considering $\setS$ to be fixed, the precoder $\mW$ is to be chosen such that $(\mathrm{C}_1)$, $(\mathrm{C}_4)$ and $(\mathrm{C}_5)$ are satisfied. These constraints describe a feasible region that can generally be empty. Noting that the Cram\'er-Rao bound is an increasing function of $\abs{\mas}$,\footnote{The proof is omitted for brevity.}  the emptiness of this region suggests the removal of a client from $\setS$. We use this fact later to derive the selection metric.

 Next, we derive the marginal problem by selecting a precoder from the feasible set for a nonempty feasible region. we address this task through the following lemma.
	
	\begin{lemma}
		\label{lem3}
		Let the set $\setW$ be described by the constraints $f_i\brc{\mW} \leq F_i$ for $I$ different functions $f_i\brc\cdot$ with $i\in \dbc{I}$. Let $\mW_0$ be the solution to the following optimization problem
		\begin{align}
			\min_\mW f_j\brc{\mW} \qquad \text{subject to} \ f_i\brc{\mW} \leq F_i \; \; \text{for} \; i \neq j,\label{optim:lem}
		\end{align}
		for some $j \in \dbc{I}$. Then, $\setW$ is nonempty, iff $f_j\brc{\mW_0} \leq F_j$.
	\end{lemma}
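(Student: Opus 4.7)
The plan is to prove both directions of the stated equivalence directly from the definitions, without any optimization machinery beyond noting that $\mW_0$, by hypothesis, minimizes $f_j$ over the relaxed feasible set $\setW_{-j} \triangleq \{\mW : f_i(\mW) \leq F_i \text{ for all } i \neq j\}$. Since $\setW = \setW_{-j} \cap \{\mW : f_j(\mW) \leq F_j\}$, the two conditions are really saying the same thing: whether the minimum value of $f_j$ over $\setW_{-j}$ fits under the threshold $F_j$.

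For the forward direction, I would assume $\setW$ is nonempty and pick any $\mW^\star \in \setW$. By definition $\mW^\star$ satisfies $f_i(\mW^\star) \leq F_i$ for every $i \in [I]$, so in particular for $i \neq j$, which places $\mW^\star \in \setW_{-j}$. Since $\mW_0$ is the minimizer of $f_j$ over $\setW_{-j}$, I immediately get $f_j(\mW_0) \leq f_j(\mW^\star) \leq F_j$, which is what we want.

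For the converse, I would assume $f_j(\mW_0) \leq F_j$ and verify that $\mW_0$ itself witnesses the nonemptiness of $\setW$. By construction $\mW_0$ belongs to $\setW_{-j}$, so $f_i(\mW_0) \leq F_i$ for every $i \neq j$, and the additional hypothesis $f_j(\mW_0) \leq F_j$ closes the remaining constraint. Hence $\mW_0 \in \setW$, so $\setW \neq \emptyset$.

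There is no real obstacle here; the only subtlety worth flagging is the tacit assumption that the relaxed problem \eqref{optim:lem} admits a minimizer $\mW_0$, i.e.\ that $\setW_{-j}$ is nonempty and $f_j$ attains its infimum on it. In our application $f_j$ will be one of the tractable (continuous, coercive or convex) functionals describing the Cram\'er-Rao bound, downlink power, or SNR scaling, so existence of $\mW_0$ is guaranteed; I would add a brief remark to this effect rather than a separate argument. If $\setW_{-j}$ itself were empty, then trivially $\setW \subseteq \setW_{-j}$ is empty and the lemma is vacuous, so the statement is best read under the implicit proviso that the relaxed problem is well posed.
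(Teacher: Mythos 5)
Your proof is correct and follows essentially the same route as the paper's: one direction exhibits $\mW_0$ itself as a witness of nonemptiness, and the other uses the minimality of $\mW_0$ over the relaxed feasible set together with $\setW \subseteq \setW_{-j}$ to get $f_j(\mW_0) \leq f_j(\mW) \leq F_j$ for any $\mW \in \setW$. Your extra remark on the existence of the minimizer is a reasonable clarification of a hypothesis the lemma already builds in, but it does not change the argument.
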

	\begin{proof}
		The proof of the forward argument follows directly from the definition of $\setW_0$: let $\mW_0$ be the solution to \eqref{optim:lem}. Hence, we have $f_i\brc{\mW_0} \leq F_i$ for $i \neq j$. Given $f_j\brc{\mW_0} \leq F_j$, we can conclude that $\mW_0\in \setW$, and hence $\setW \neq \emptyset$.
		
		For proof of converse, we start with defining the set $\setW_j$ to be the feasible set of \eqref{optim:lem}, i.e.,
		$
			\setW_j = \set{ \mW:  f_i\brc{\mW} \leq F_i \; \; \text{for} \; i \neq j }.
	$
		For $\mW \in \setW_j$, we have $f_j \brc{\mW} \geq f_j \brc{\mW_0}$. Note that $\setW \subseteq \setW_j$, we conclude $f_j \brc{\mW} \geq f_j \brc{\mW_0}$ also for $\mW \in \setW$. On the other hand, for $\mW \in \setW$, we have, by definition, $f_j \brc{\mW} \leq F_j $.  This implies that for $\mW \in \setW$,
	$		f_j \brc{\mW_0} \leq 	f_j \brc{\mW} \leq F_j. 
	$
		For $\setW \neq \emptyset$, this concludes that $f_j\brc{\mW_0} \leq F_j$. 
	\end{proof}
	
	We can utilize the above lemma as follows: for a given $\setS$, there exists a feasible precoder $\mW$ that satisfies the constraints $(\mathrm{C}_1)$, $(\mathrm{C}_4)$ and $(\mathrm{C}_5)$ if and only if $\mathrm{CRB} \brc{\setS, \mW^\star} \leq \Gamma_0$, where $\mW^\star$ is the solution to the following optimization problem:
 \begin{subequations}
	\begin{align}
		\brc{\mam_1}~&\min_{\mW}  \mathrm{CRB} \brc{\setS, \mW}\nonumber\\ 
		\text{s. t. } 
		&\frac{1}{L} \tr{ \mW \mW^\her } \leq P_{\rm d},
		~\gamma  -  \frac{1}{K} \sum_{k=1}^K \frac{\bh_k^\trp\mW\mW^\her \bh_k^*}{L}  \leq 0.\nonumber
	\end{align}
	\end{subequations}
	We hence consider $\mam_1$ as the first marginal problem that minimizes the Cram\'er-Rao bound subject to power and quality constraints. Its solution determines whether further reduction in the number of clients is necessary. $\mam_1$ minimizes the achievable sensing error, while maintaining the power budget and ensuring a minimum tolerable \ac{snr} scaling averaged over time and clients. We solve the marginal problem using \ac{lmi} programming., i.e., we define the variable $\mQ = \mW\mW^\her$ to write $\mam_1$ as:
\begin{subequations}
		\begin{align}
		&\min_{\mQ}  \frac{1}{2} \tr{\mR}  \tr{\mQ^{-1} } \nonumber \\
		&\text{s.t. }  
		~\frac{1}{L} \tr{ \mQ } \leq P_{\rm d}, 
		\gamma  -  \frac{1}{K} \sum_{k=1}^K \frac{\bh_k^\trp\mQ \bh_k^*}{L}  \leq 0, ~\mQ \succeq 0.\nonumber
	\end{align}
\end{subequations}
As ${\mR}$ does not depend on $\mQ$, we can drop $\tr{\mR}$. By defining the auxiliary variable $\mX \succeq \mQ^{-1}$, we write the above optimization as \cite{boyd2004convex}:
\begin{subequations}
		\begin{align}
		&\min_{\mQ}   \tr{\mX } \nonumber\\
		&\text{s.t. }
		~\frac{1}{L} \tr{ \mQ } \leq P_{\rm d}, \quad
		\gamma  -  \frac{1}{K} \sum_{k=1}^K \frac{\bh_k^\trp\mQ \bh_k^*}{L}  \leq 0,\nonumber \\ &\qquad \mQ \succeq 0 \; ~\text{and}~ \; \mX \succeq \mQ^{-1}\nonumber.
	\end{align}
\end{subequations}
	Using Schur's complement argument, the problem reduces to  \ac{lmi} problem:
\begin{subequations}
		\begin{align}
		&\min_{\mX, \mQ} \qquad \tr{ \mX }\nonumber \\
		&\text{s.t. } \qquad 
		\frac{1}{L} \tr{ \mW \mW^\her } \leq P_{\rm d}, \\
		&\gamma  -  \frac{1}{K} \sum_{k=1}^K \frac{\bh_k^\trp\mW\mW^\her \bh_k^*}{L}  \leq 0, \\
		&\mQ \succeq 0 \; \text{ and }\; 
		\begin{bmatrix}
			\mX &\mI\\
			\mI &\mQ
		\end{bmatrix} \succeq 0.
	\end{align}
\label{eq:W_star}
\end{subequations}
	This latter form can be readily solved using standard semi-definite programming (SDP), e.g., using \texttt{cvx} toolbox in \texttt{MATLAB} \cite{grant2014cvx}. Interestingly, this problem does not depend on $\setS$ and $\bc$, and its solution remains unchanged as the algorithm iterates over $\setS$. In fact, the dependency of the Cram\'er-Rao bound on $\setS$ is only through $\mR$, which does not appear in this marginal problem. This indicates that the design of precoder decouples from the design of the post-processing unit. This leads to a significant reduction in complexity.

        \vspace{-.3cm}
	\subsection{Marginal Problem for Receive Beamforming Design}
	For a given $\setS$, the design of the receiver reduces to finding a vector $\bc$ that satisfies the constraints $(\mathrm{C}_2)$ and $(\mathrm{C}_3)$. Starting with constraint $(\mathrm{C}_2)$, we are looking for a vector $\bc$ that fulfills
	\begin{align}
		\mae\brc{\bc,\setS} = \frac{1+N/L}{P_{\rm u} /\sigma^2 }\max_{k\in\setS} \frac{ \phi_k^2 \norm{\bc}^2 }{\abs{\bc^\her \bff_k}^2 } + \frac{N}{ L} \sum_{k\in\setS} \phi_k^2 \leq \epsilon_0.
	\end{align} 
	The above constraint is equivalently represented by the  set of constraints $C_k\brc{\bc}	\leq 0$	for $k \in \setS$, where $C_k\brc{\bc}$ is defined as
	\begin{align}
		C_k\brc{\bc} =
		{ \phi_k^2 \norm{\bc}^2 }
		+  \frac{P_{\rm u}  }{\sigma^2\brc{1+N/L}} \brc{\frac{N}{ L} \sum_{k\in\setS} \phi_k^2 - \epsilon_0 } {\abs{\bc^\her \bff_k}^2 }.
		\label{eq:C_k}
	\end{align}
Following \cite{bereyhi2023device}, we combine the individual constraints into a single linear constraint using the following lemma.
	\begin{lemma}
		\label{lem4}
		Define ${C^\star}\brc{\bc}$ as
		\begin{align}
			C^\star \brc{\bc} = \max_{\tau_k\in\dbc{0,1}: k\in\setS}\sum_{k\in\setS} \tau_k C_k\brc{\bc}.
		\end{align}
		The set of constraints $C_k\brc{\bc}	\leq 0 $ for $k \in \setS$ at point $\bc$ is satisfied if and only if ${C^\star}\brc{\bc}  \leq 0$.
	\end{lemma}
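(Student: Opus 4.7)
The plan is to establish the biconditional by proving each direction separately via elementary arguments on the structure of the weighted sum. The key observation is that $\tau_k \in [0,1]$ forces the weights to be nonnegative, which couples the sign of $\sum_{k\in\setS} \tau_k C_k(\bc)$ directly to the signs of the individual $C_k(\bc)$.

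For the forward direction ($\Rightarrow$), I would assume $C_k(\bc) \leq 0$ for every $k \in \setS$. Since each $\tau_k \geq 0$, every term $\tau_k C_k(\bc)$ in the sum is nonpositive, so the sum itself is nonpositive for any feasible choice of $(\tau_k)_{k\in\setS}$. Taking the maximum over all such choices preserves the inequality, yielding $C^\star(\bc) \leq 0$. (In fact the maximum is attained at $\tau_k=0$ for all $k$, giving $C^\star(\bc)=0$, but the proof only needs the bound.)

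For the converse ($\Leftarrow$), I would argue by contrapositive. Suppose there exists some index $k_0 \in \setS$ with $C_{k_0}(\bc) > 0$. Choose the feasible selection $\tau_{k_0}=1$ and $\tau_k = 0$ for all $k \neq k_0$; this is admissible since each $\tau_k \in [0,1]$. Evaluating the objective at this point gives $\sum_{k\in\setS} \tau_k C_k(\bc) = C_{k_0}(\bc) > 0$, and since $C^\star(\bc)$ is the maximum over all feasible $(\tau_k)$, we obtain $C^\star(\bc) \geq C_{k_0}(\bc) > 0$, contradicting $C^\star(\bc) \leq 0$. Hence $C_k(\bc) \leq 0$ must hold for every $k \in \setS$.

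There is essentially no obstacle in this argument; the lemma is a straightforward max-aggregation of pointwise inequalities. The only subtlety worth flagging is the nonnegativity of the weights: had $\tau_k$ been permitted to take negative values, the forward direction would fail, so the restriction $\tau_k \in [0,1]$ (rather than merely $\tau_k \in [-1,1]$ or $\tau_k \in \setR$) is essential. No further machinery, convexity argument, or duality result is needed.
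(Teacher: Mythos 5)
Your proof is correct and follows essentially the same route as the paper: nonnegativity of the weights gives the forward direction, and the selection $\tau_{k_0}=1$, $\tau_k=0$ for $k\neq k_0$ gives the converse (the paper phrases it directly rather than by contrapositive, which is an immaterial difference). No gaps.
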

	\begin{proof}
	The proof is straightforward: assume that all the~constraints are satisfied at $\bc$. Since $\tau_k$ is non-negative, this concludes that ${C^\star}\brc{\bc}\leq 0$. For the converse proof, we note that ${C^\star}\brc{\bc}  \leq 0$ implies $\sum_{k\in\setS} \tau_k C_k\brc{\bc} \leq 0$ for any $\tau_k\in\dbc{0,1}$~with $k\in\setS$. For any $j\in\setS$, the choice $\tau_j = 1$ and $\tau_k = 0$~for $k\neq j$ leads to $\sum_{k\in\setS} \tau_k C_k\brc{\bc} = C_j\brc{\bc}$. Thus, we can conclude that  ${C^\star}\brc{\bc} \leq 0$ implies $C_j\brc{\bc}\leq 0$. This concludes the proof.
	\end{proof}
	
	Invoking the above lemma, we now define $
		\bar{C}\brc{\bc} = \sum_{k\in\setS} \tau_k C_k\brc{\bc}$ as the \textit{overall penalty}. Our marginal problem hence reduces to finding a feasible point $\bc$ that satisfies $\bar{C}\brc{\bc} \leq 0$ and $\norm{\bc} = 1$ for any choice of $\tau_k\in\dbc{0,1}$ with $k \in \setS$. Using this alternative form of the feasible region, we now invoke Lemma~\ref{lem3} and find the point with minimal overall penalty, i.e., we find $\bc$ by solving the following optimization:
	\begin{align}
		\min_{\bc} \bar{C}\brc{\bc} \qquad \text{subject to} \; \norm{\bc} = 1. \label{optim:2}
	\end{align}
	This problem is of a standard form. To show this, let us represent the overall penalty  as 
	\begin{align}
		\bar{C}\brc{\bc} = \bc^\her \brc{\bar{\phi} \mI + \xi \mF \mT \mF^\her } \bc, \label{compact}
	\end{align}
	where $\bar{\phi}$ and $\xi$ are defined as
		$\bar{\phi} = \sum_{k\in\setS} \tau_k \phi_k^2$,
		$\xi =  \frac{P_{\rm u}  }{\sigma^2\brc{1+N/L}} \brc{\frac{N}{ L} \sum_{k\in\setS} \phi_k^2 - \epsilon_0 }$,
	and $\mT = \mathrm{diag} \set{\tau_1,\ldots,\tau_K}$ with $\tau_k = 0$ for $k\not\in\setS$. Given the compact form in \eqref{compact}, the solution to the optimization in \eqref{optim:2} is given by the eigenvector of $\bar{\phi} \mI + \xi \mF \mT \mF^\her $ that corresponds to its minimum eigenvalue: let $\lambda_1 \leq \ldots \leq \lambda_N$ and $\buu_1, \ldots, \buu_N$ denote the sorted eigenvalues of $\bar{\phi} \mI + \xi \mF \mT \mF^\her$ and their corresponding eigenvectors, respectively. The optimal receiver is then given by $\bc = \buu_1$. This solution can be readily calculated by means of \ac{svd}. To be consistent with Lemma~\ref{lem4}, the choice of $\tau_k$ in this solution needs to be the one that maximizes the overall penalty. In \cite{bereyhi2023device}, a low-complexity approach for choosing $\tau_k$ in each iteration has been proposed, which follows from the low sensitivity of the result on small deviation in $\tau_k$. We illustrate this scheme in the next section along with the selection metric.
  	\begin{algorithm}
		\caption{Joint Scheduling, Precoding, and Post-Processing}\label{alg:cap}
		\begin{algorithmic}
			\STATE Set $\setS = \dbc{K}$, choose  $\delta, \tau_0 \in \dbc{0,1}$, and find $\mW^\star$ by ~\eqref{eq:W_star} 
			\STATE Set $\mT = \mI_K$ and $\bc^\star = \buu_1$ with $\buu_1$ being the eigenvector of the largest eigenvalue of $\bar{\phi} \mI + \xi \mF \mT \mF^\her$
			\WHILE{$\mathrm{CRB} \brc{\setS, \mW^\star} > \Gamma_0$ or $\mae\brc{\bc^\star,\setS}  > \epsilon_0$}
			\STATE Find $k^\star = \argmax_{k\in\setS} \mas_k$ and update $\setS \leftarrow \setS - \set{k^\star}$ \texttt{\# MP Scheduling}
			\STATE Set $\tau_{k^\star}  = 0$ and update $\tau_k$ for $k\in\setS$ via \eqref{eq:cuttingset} \texttt{\# Subset-cutting weight update}
			\STATE Update $\mT \leftarrow \Diag{\tau_1, \ldots,\tau_K}$ and $\bc^\star \leftarrow \buu_1$.
			\ENDWHILE
		\end{algorithmic}
	\end{algorithm}
	\subsection{Selection Metric and the MP-based Algorithm}
	We now intend to find a metric by which we can choose the \textit{least effective} device. To this end, let us consider a particular iteration of the algorithm and assume that the scheduled set in this iteration is $\setS$. From the marginal problem $\mam_1$, we know that the designed precoding satisfies the power constraint, as well as the constraint on the downlink channel quality, i.e., constraints $(\mathrm{C}_4)$ and $(\mathrm{C}_5)$. The marginal problem $\mam_2$ further guarantees the satisfaction of $(\mathrm{C}_3)$. Our feasibility check is hence limited to validating $\mathrm{CRB} \brc{\setS, \mW^\star} \leq \Gamma_0$ and  $\mae \brc{\setS, \bc^\star} \leq \epsilon_0$, where $\mW^\star$ and $\bc^\star$ denote the solutions to the marginal problems $\mam_1$ and $\mam_2$, respectively. 
	
	We now assume that the derived solutions are infeasible. This means that either the \textit{sensing penalty}
	$
	C_{0} \brc{\setS,\mW^\star} = \mathrm{CRB} \brc{\setS, \mW^\star} - \Gamma_0
	$
	or the maximal aggregation penalty, i.e., $\max_{k \in \setS} C_k\brc{\bc^\star}$ for $C_k\brc{\bc^\star}$ defined in \eqref{eq:C_k}, is positive. We aim to find an index $k^\star\in\setS$ whose removal from $\setS$ leads to maximal reduction of the positive penalties in the next iteration. We address this task by defining the metric
 \begin{subequations}
	\begin{align}
		\mas_k &=  C_k\brc{\bc^\star} - \tau_0 {C_{0} \brc{\setS - \set{k},\mW^\star}},\\
		&= { \phi_k^2 }
		-  \frac{P_{\rm u} }{\sigma^2\brc{1+N/L}}  \brc{ \epsilon_0 - \frac{N}{ L} \sum_{k\in\setS} \phi_k^2  } {\abs{\bff_k^\her \bc^\star}^2 }, \nonumber \\ &- \tau_0 C_0 \brc{\setS - \set{k},\mW^\star},
	\end{align}
  \end{subequations}
	for some $\tau_0 \in\dbc{0,1}$. From the definition, one observes that $\mas_k$ describes the \textit{constraints violation} by client $k$: the larger $\mas_k$ is, the smaller the positive penalties will be after removing client $k$. We hence set $
		k^\star = \argmax_{k\in\setS} \mas_k, $
	which can be found by a linear search.
	\begin{remark}
		From the definition, one can observe that the maximization of $\mas_k$ is equivalent to the minimization of the inner product  $\abs{\bff_k^\her \bc^\star}$ added with some sensing-related penalty. We can hence perceive this selection metric as a regularized version of the classical matching pursuit~metric.
	\end{remark}
	
	The final algorithm is presented in Algorithm~1. 
	The proposed algorithm requires tuning penalty weights, i.e., $\tau_k$ for $k\in\dbc{K}$, as well as the regularization weight $\tau_0$. Given a fixed regularization weight $\tau_0$, the tuning of penalty weights can be addressed in each iteration via the \textit{subset-cutting} technique \cite{bereyhi2023device}. For device $k \in \setS$ in each iteration, we set $\tau_k$ to be 
	\begin{align}
		\tau_k = \begin{cases}
			\delta &\mas_k > 0\\
			1-\delta &\mas_k \leq 0,
		\end{cases}
	\label{eq:cuttingset}
	\end{align}
	for some $\delta\in\dbc{0,1}$. Noting that $\mas_k$ depends on $\tau_0$, we tune the weights by optimizing the performance of the algorithm against $\delta$ and $\tau_0$, which can be readily addressed numerically. The comparison against the optimally tuned $\tau_k$ in \cite{bereyhi2023device} shows that this approach closely follows the optimal solution as a drastically lower computation cost.
 	\begin{algorithm}
	\caption{Joint Design for OTA-FEEL and Sensing}
	\begin{algorithmic}[1]
	\FOR{iteration $t = 1, \ldots, T $}
		\STATE Choose active devices via \textbf{Algorithm 1} and store in $\setS$.
		\STATE Broadcast current model parameter $\mathbf{r}^{(t)}$ to all devices
		\FOR{each client $k\in \dbc{K}$ \textit{in parallel}}
            \STATE Set $\mathbf{r}_k^{(0)}=\mathbf{r}^{(t)}$
    		\STATE For $j=0,\ldots, \vartheta-1$, update $\mathbf{r}_k^{(j+1)}=\mathbf{r}_k^{(j)}-\zeta \nabla \mathcal{L}_k(\mathbf{r}_k^{(j)} \vert \mathcal{D}_k)$ and set $\mathbf{r}_k=\mathbf{r}_k^{(\vartheta)}$
		\ENDFOR
        \FOR{each client $k\in \setS$}
    		\STATE Transmit $s_k[m] = b_k[m] r_k[m]$ in $M$ intervals synchronously
		\ENDFOR
		\STATE Compute $\mathbf{R}$ as in (7) and decompose it as $\mathbf{R} = \mathbf{U} \boldsymbol{\Lambda} \mathbf{U}^\mathsf{H}$
		\STATE Set $\mathbf{T} = \mathbf{U} \boldsymbol{\Lambda}^{-1/2} \mathbf{U}^\mathsf{H}$, compute the sufficient statistics as $\tilde{\mathbf{y}}(\ell) = \mathbf{T} \mathbf{y}(\ell)$ for $\ell\in [L]$, and collect them in $\tilde{\mathbf{Y}}$
		\STATE Estimate echoes matrix as $\hat{\mathbf{G}}=\mathbf{T}^{-1} \tilde{\mathbf{Y}} {(\mathbf{W} \mathbf{D})}^\dagger$ and cancel its impact as $\bar{\mathbf{y}}(\ell) = \mathbf{y}(\ell) - \hat{\mathbf{G}} \mathbf{w}(\ell) x(\ell)$ for $\ell=1, \ldots, L$
		\STATE Compute global model from $\bar{\mathbf{y}}(\ell)$ and set $t\leftarrow t+1$
	\ENDFOR
	\end{algorithmic}
\end{algorithm}
\subsection{Complexity and Convergence of Scheduling Algorithm}
The proposed algorithm poses sub-quadratic complexity in terms of the number of devices: in iteration $t$ of the matching pursuit scheme, the algorithm performs a linear search over a subset of $K-t$ devices in $\setS$. To update the post-processing unit $\bc$ in \textit{each iteration}, the algorithm determines the eigenvector of $\bar{\phi} \mI + \xi \mF \mT \mF^\her $ that corresponds to its maximal eigenvalue. The direct approach is to determine the \ac{svd} of the matrix $\bar{\phi} \mI + \xi \mF \mT \mF^\her $, whose complexity scales with $KN^2$. Noting that the number of iterations is less than $K$ (after $K$ iterations, we end up with $\setS=\emptyset$), we conclude that with direct implementation via \ac{svd} the complexity of the algorithm scales with $K\cdot KN^2 = K^2N^2$. It is worth mentioning that the order of complexity is further reduced by using an extreme value decomposition technique in each iteration, e.g., the complexity of the iterative algorithm proposed in \cite{schwetlick2003iterative} scales with $K^qN^p$ for some $0 < q< 1$ and $1\leq p <2$.

As a comparing reference, one can consider random, greedy and exhaustive search algorithms: the former scales slower than the proposed algorithm by only a factor $K$, as it does not perform the linear search in each iteration. The greedy approach further scales similarly, as it performs linear search (by a different selection rule than the matching pursuit one) in each iteration. Both random and greedy search perform inferior to the proposed algorithm; see Section~\ref{sec:NumA} for numerical validation. The exhaustive search on the other hand scales exponentially with $K$ and hence is infeasible.


It is further easy to show that the proposed scheduling converges to a point which is locally optimal: considering the step-wise nature of matching pursuit technique \cite{foucart2013mathematical}, we can guarantee that the omitted device in each iteration of Algorithm~\ref{alg:cap} is the locally optimal choice, i.e., the one whose removal leads to maximal reduction in the aggregation error and Cram\'er-Rao bound. We further note that the marginal problems solved in Algorithm~\ref{alg:cap} are coupled only through $\setS$, and hence are solved optimally for the fixed $\setS$ in each iteration. This further guarantees the local optimally of $\mW^\star$ and $\bc^\star$. 
\vspace{-.4cm}
\subsection{Convergence of OTA-FEEL Framework}
We next provide convergence analysis for the \ac{ota-fl} framework, i.e., to discuss the convergence of federated averaging when employed via the proposed scheduling algorithm. To this end, we follow the classical assumptions that the individual point-wise loss function $\ell_i\brc{\br \vert \maD_k }$ for all $i$ and $k$ are \textit{continuously differentiable}  with respect to the model parameters. We further assume that the gradient of global loss $\mal\brc{\br_i \vert \maD}$ is Lipschitz continuous for all $\br_1, \br_2 \in \setR^D$ and some  $L_{Lip}>0$, i.e., $\norm{\nabla \mal\brc{\br_1\vert\maD} -  \nabla \mal\brc{\br_2\vert\maD}} \leq L_{Lip} \norm{\br_1-\br_2}$. For ease of analysis, we consider losses that are strongly convex with parameter $0 < \mu <L_{Lip}$, i.e., for all $\br, \boldsymbol{\delta} \in \setR^D$ 
\begin{align}
	\mal\brc{ \br + \boldsymbol{\delta}\vert \maD} \geq \mal\brc{\br\vert \maD} + \boldsymbol{\delta}^\trp \nabla \mal\brc{\br\vert \maD} + \frac{\mu}{2} \norm{\boldsymbol{\delta}}^2.
\end{align}
We further denote the minimal loss by $\mal^\star$. 



 We start the analysis by defining the \textit{optimality gap}:
\begin{definition}
	The optimality gap at communication round $t$ is the expected difference between the global loss computed at round $t$ and the minimal loss, i.e.,
	\begin{align}
		G_{t} = \Ex{\abs{\mal(\br^{(t)}\vert\maD) - \mal^\star } },
	\end{align}
where the expectation is taken with respect to the randomness of the learning algorithm, e.g., random batches in \ac{sgd}.
\end{definition}
Invoking Lemma 2.1. in \cite{friedlander2012hybrid}, we can bound the optimality gap in each iteration in terms of the last iteration as follows: 
\begin{align}
	G_{t+1} \leq \brc{1-\frac{\zeta}{L_{Lip}}} G_{t} + \frac{1}{2L_{Lip}} \mae^{(t)}\brc{\bc^\star, \setS}, \label{eq:upper}
\end{align}
where $\zeta$ is the learning rate, and $ \mae^{(t)}\brc{\bc^\star, \setS}$ denoting the aggregation error in communication round $t$ achieved by Algorithm~\ref{alg:cap}. Noting that the solution of Algorithm~\ref{alg:cap} is within the feasible region of the principal scheduling problem $\maP_1$, we can conclude $ \mae^{(t)}\brc{\bc^\star, \setS} \leq \epsilon_0$. Substituting into the upper bound, we have
\begin{align}
	G_{t+1} \leq \brc{1-\frac{\zeta}{L_{Lip}}} G_{t} + \frac{\epsilon_0}{2L_{Lip}}.\label{eq:upper2}
\end{align}
This implies that at any iteration, there exists a choice of learning rate which can reduce the optimality gap. This concludes the convergence of the \ac{ota-fl} setting. 
The converging optimailty gap is further proportional to the aggregation error bound $\epsilon_0$ which can be freely tuned in our proposed algorithm. 

\color{black}

	\section{Numerical Results and Discussions}
	In this section, we analyze the performance of OTA-FEEL with integrated sensing. We present the setup, numerical results, and insights on important network parameters.
	\subsection{Experimental Set-up and Benchmarks}
 \label{sec:NumA}
 We consider a setting with $K=20$ devices, $N=8$ antennas at the PS, and a single target, which is often located in an open area at a moderate distance from the server, or flying objects in the line of sight of the base station\cite{liu2022survey}. For the range of distances considered, we assume that the target is in the \ac{los} of the transmit array at the \ac{AoA} $\theta$ and distance $d$, with negligible scatterings from its surroundings\footnote{The reason to consider  LoS sensing channel is that the target is often located in open area in a moderate distance of the server, e.g., flying objects in the sight of the PSs  based on existing research works \cite{jing2024isac, liu2022survey, liu2022integrated, meng2023uav, lyu2022joint}, however, this is not a limitation of this framework and both sensing and communication channel models can be modeled as LoS or NLoS.}. In this 2-D setting, the sensing channel is given by $\mG = \alpha \mA\brc{\theta}$.
 Here, $\alpha$ describes the path-loss, and $\mA\brc{\theta} = \baa\brc{\theta} \baa\brc{\theta}^\trp$ for the array response $ \baa\brc{\theta} = \dbc{1, \exp\set{j2\pi d_{\rm e} \sin\theta}, \ldots, \exp\set{j2\pi \brc{N-1} d_{\rm e} \sin\theta}}^\trp$ with $d_{\rm e}$ being the antenna element spacing. \color{black} The path-loss $\alpha$ is determined as $\alpha = \varrho^\mathrm{t} \brc{d}=\varrho^\mathrm{\mathrm{t}}/{d}^{-\varepsilon_\mathrm{t}}$, where $\varrho^\mathrm{\mathrm{t}}$ is the path-loss at the reference distance $d_0=1$ m and $\varepsilon_\mathrm{t}$ denotes the path-loss exponent associated with the target echoes \cite{rappaport2015wideband}. Note that for sensing purposes, the signal travels the distance twice, i.e., radiation and reflection. Therefore, $\varepsilon_\mathrm{t}=4$ corresponds to the free-space scenario \cite{kumari2019adaptive}. It is worth mentioning that since our algorithm directly estimates $\mG$, assuming other settings does not impact the result. \color{black} To capture realistic environment characteristics, we set $\varepsilon_\mathrm{t}=4.5$. The reference intercept $\varrho^\mathrm{\mathrm{t}}$ is further determined by $\varrho^\mathrm{\mathrm{t}}=\dfrac{\sigma_{\rm RCS} \lambda^2}{\brc{4\pi}^3}$, where $\sigma_{\rm RCS}$ represents the target-radar cross-section and $\lambda$ is the wavelength. Throughout the simulations, we assume~$\sigma_{\rm RCS}=0.1$.
	 The devices are uniformly and randomly distributed around the \ac{ps} within a ring whose inner and outer radii are $D_{\rm in}=200$~m and $D_{\rm out}=500$~m, respectively. The maximum transmit powers of the \ac{ps} and edge devices are set to $P_{\mathrm{d}}=1$ W and $P_{\mathrm{u}}=1$ mW, respectively. The uplink and downlink channels are assumed to experience standard Rayleigh fading. This means that 
	$\bff_k=\varrho \brc{d_k}\bff_k^0$ and $\bh_k=\varrho \brc{d_k}\bh_k^0$, where $d_k$ represents the distance of device $k$ to the \ac{ps}. Here, $\bff_k^0$ and $\bh_k^0$ capture the small-scale fading effects and are generated as \ac{iid} complex-valued zero-mean Gaussian vectors with unit variance. Furthermore, $\varrho \brc{d_k}$ accounts for the path-loss determined as $\varrho \brc{d_k}=\varrho^0/{d_k}^{-\varepsilon_\mathrm{c}}$ with $\varepsilon_\mathrm{c}$ denoting the path-loss exponent specific to the device-to-\ac{ps} channel and $\varrho^0=\left( {\lambda}/{4 \pi }\right) ^2$. Throughout the simulations, we set $\varepsilon_\mathrm{c}=2.5$ and the carrier frequency to $5$ GHz. The noise power is further determined as $\sigma_{\rm PS}^2=N_0B$ for the bandwidth $B=10$ MHz and the noise power spectral density $N_0=9.8\times 10^{-18}$ W/Hz. In the simulations, the device scheduling strategy is adaptive, meaning that the selected devices do not have to be fixed for the entire FEEL process.
    We assume that~the selected devices are updated at the beginning of each coherence time interval during the CSI-acquisition phase. This approach ensures that the system adapts to changing channel conditions and optimizes the performance of OTA-FEEL over time.
 \begin{figure}[t!]
		\centering
		\includegraphics{ 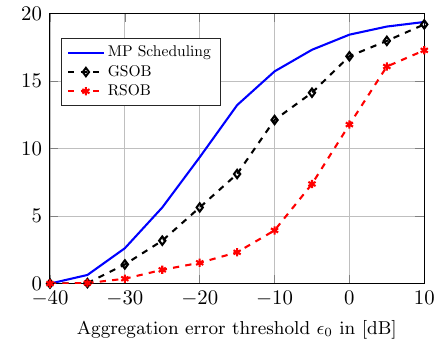} 
		\caption{ Average of selected devices versus aggregation error threshold.}
		\label{fig:Fig3a}
	\end{figure}
 \begin{figure}[t!]
	\centering
	\includegraphics{ 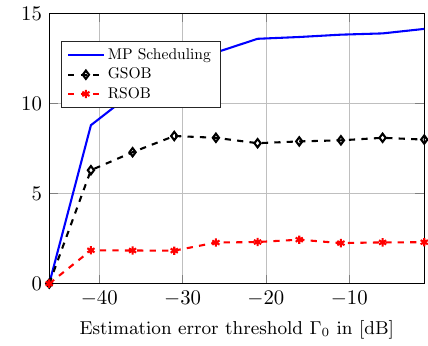} 
	\caption{Average number of selected devices versus sensing estimation error threshold.}
	\label{fig:Fig3b}
\end{figure}
\begin{figure}[t!]
	\centering
	\includegraphics{ fig1F.pdf} 
	\caption{\color{black}Average of selected devices versus aggregation error threshold.}
	\label{fig:FigF}
\end{figure}
    \begin{figure}[t!]
		\centering
		\includegraphics{ 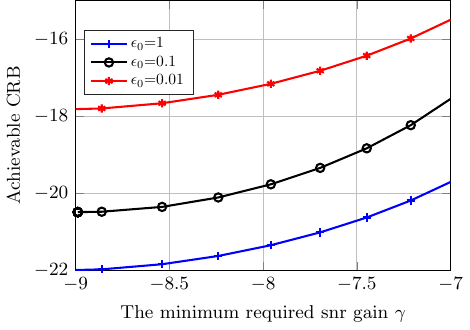} 
		\caption{Achievable error versus the minimum required downlink \ac{snr}. }
		\label{fig:Fig2}
	\end{figure}
 \begin{figure}[t!]
	\centering
	\includegraphics{ 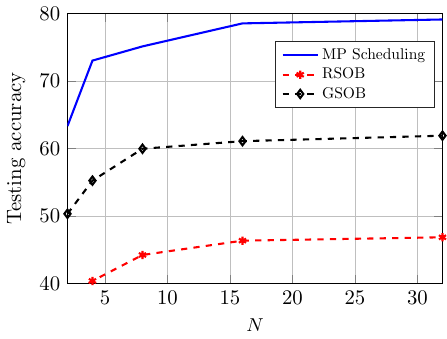} 
	\caption{  Testing accuracy versus the number of antennas at the PS.}
	\label{fig:FigNew}
\end{figure}
  \begin{figure}[t!]
	\centering
	\includegraphics{ 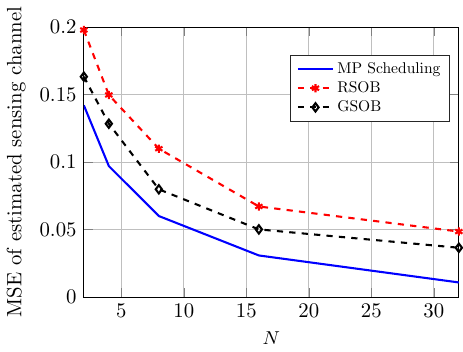} 
	\caption{ MSE of the estimated sensing channel versus $N$.}
	\label{fig:FigNew2}
\end{figure}
 \par For comparison, we also consider the following approaches:
	\begin{itemize}[leftmargin=0.25cm]
	\item \textit{Random Scheduling with Optimal Beamforming (RSOB)}: During each iteration, we randomly delete one user and design $\mW$ and $\bc$ based on \eqref{eq:W_star} and \eqref{compact},~respectively. 
	\item \textit{Greedy Scheduling with Optimal Beamforming~(GSOB)}:
In each iteration, we remove the device~with weakest channel and design $\mW$ and $\bc$ based on \eqref{eq:W_star} and \eqref{compact},~respectively.
	\end{itemize}
 
\begin{figure*}[t!]
\begin{tabular*}{\textwidth}{@{\extracolsep{\fill}}cccc} 
	\includegraphics{ 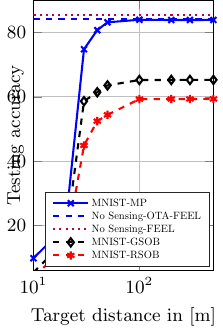} 
	&	\includegraphics{ 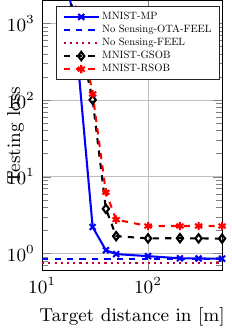} 
	&	\includegraphics{ 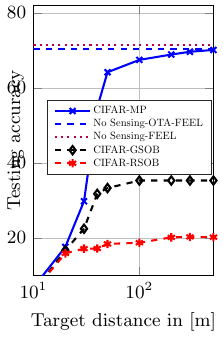}  
	&	\includegraphics{ 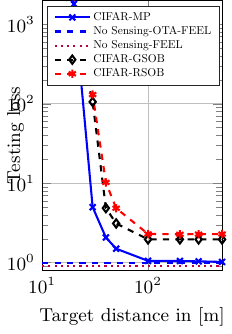}  \\
	 (a) & (b) & (c) & (d)\\
\end{tabular*}
\caption{ Testing accuracy and testing loss versus target distance for MNIST and CIFAR dataset.}
\label{Fig8}
\end{figure*} 
Next, we check if the designed precoder and receiver lead to a feasible point in the original problem. If infeasibility arises, we delete the user again given the aforementioned criteria.	
	\subsection{Learning Setup} 
 	We consider two learning setups: in the first setup, we use logistic regression on the MNIST dataset, which has 60,000 $28\times 28$ images. We train on 83\% of the dataset and test on the remaining portion \cite{xiao2017fashion}. The training and testing datasets are randomly shuffled and unevenly distributed among devices. We assign random percentages to each client using the Dirichlet distribution, simulating non-uniform data distribution in federated learning. This introduces diversity in data distribution, which is important for simulating federated learning scenarios.
  In addition, we perform image classification on the CIFAR-10 dataset \cite{krizhevsky2009learning}. It contains the same number of images, with each image having 3 color channels and dimensions of $32\times32$ pixels. The dataset is divided into 10 classes. We use the same training and testing split. 
For this task, we employ a deep CNN architecture consisting of three convolutional blocks, each including two convolutional layers, where the first convolutional layer in each block uses $3 \times 3$ kernels with a varying number of filters (32, 64, 128, and 256), followed by ReLU activation and a $2 \times 2$ max-pooling layer. The convolutional layers are designed to progressively capture more abstract features from the input images.

Following the convolutional blocks, the network features three fully connected layers. The first two fully connected layers have 1024 and 512 units respectively, with ReLU activations applied after each layer. The final fully connected layer outputs a 10-dimensional vector, representing the class probabilities for classification. We note that while our experiments primarily focus on the MNIST and CIFAR-10 datasets, the proposed framework can be easily extended to more complex datasets, such as FMNISt or ImageNet with similar insights expected.
\vspace{-.4cm}
\subsection{Results and Discussions}
	\subsubsection{Scheduling Performance}
	\par To evaluate the performance of the proposed matching pursuit-based  algorithm, the average number of selected devices is plotted against aggregation error threshold $\epsilon_0$ in Fig. \ref{fig:Fig3a}. The proposed algorithm significantly outperforms the considered benchmarks. The same trend can be observed in Fig. \ref{fig:Fig3b}, which displays the average number of selected devices against the sensing estimation error threshold $\Gamma_0$.
	The results demonstrate that the proposed approach based on matching pursuit is able to select more devices than the other state-of-the-art approaches. Another intuitive finding is given by observing the trend in both figures: by loosening the restriction on the sensing and aggregation error, a larger number of users will participate in FEEL learning process. Comparing the two figures further indicates that the increase in the aggregation error threshold $\epsilon_0$ leads to faster growth in the number of selected devices as compared to the sensing error threshold $\Gamma_0$. This observation is intuitive, since the scheduling is primarily controlled by the threshold on the aggregation error while the sensing error threshold implicitly influences the scheduling through its effect on the effective noise variance. 
 Moreover, the average number of selected devices versus the aggregation error threshold $\epsilon_0$ for different values of $K$ is plotted in Fig.~\ref{fig:FigF}. Our simulation results show that although the dataset size per user decreases as the number of users increases, the total dataset in each round generally increases. This indicates that the system can accommodate a higher number of users while maintaining a substantial total dataset per round, which could potentially enhance learning performance.

 \subsubsection{Sensing-Learning-Communication Trade-off}
Fig. \ref{fig:Fig2} illustrates a three-dimensional trade-off between sensing, learning and communication. The figure shows the achievable Cram\'er-Rao bound against the minimum \ac{snr} scaling $\gamma$ in the downlink channel for various choices of the tolerable aggregation error $\epsilon_0$. As the figure shows, for a given $\epsilon_0$, the achievable Cram\'er-Rao bound increases as $\gamma$ grows large. This is a logical trade-off: as we increase $\gamma$, the downlink beam focuses toward the edge devices for downlink communication, thereby reducing its focus on the target for sensing. This leads to diminished sensing quality and consequently, an increased sensing error. The figure further depicts the learning and sensing trade-off: as $\epsilon_0$ increases, the achievable Cram\'er-Rao bound drops. This is logically consistent, since by increasing $\epsilon_0$, we loosen the constraint on aggregation error. This means that the scheduling mainly focuses on reducing the sensing error, which in turn leads to a decrease in the sensing error. 

Figs.~\ref{fig:FigNew} and \ref{fig:FigNew2} demonstrate the performance of the proposed algorithm against \ac{ps} array-size. As the figure shows, the proposed algorithm depict the same trend as the benchmarks with a marginally large gap. The gap is in particular large for testing accuracy. This is rather intuitive, since the proposed scheme follows a learning-aware metric for scheduling whereas the benchmarks schedule only based on the quality of the communication channels.

 \subsubsection{Learning Performance}
 Figs.~\ref{Fig8}~(a)-(d) illustrate the test accuracy and loss of classification over the MNIST and CIFAR datasets against the distance $d$~of~the~target from the \ac{ps} under the assumption that the clients remain at fixed distances. As a comparison, we consider a scenario without sensing at the PS for both OTA-FEEL and standard FEEL cases. In this case, the PS solely focuses on model aggregation without any involvement in target sensing. As depicted in the figures, the learning algorithm fails to converge when $d$ is chosen to be very small. This is due to the fact that at such proximate distances, the sensing echos strongly interfere, and their impact is not suppressed even after cancellation. The learning algorithm starts to converge and rapidly reach saturation as the target moves towards moderate distances. In this range, the angle estimation is still accurate, and hence, the interference after cancellation weakly impacts model aggregation. At very large distances, the sensing signal becomes so faint that its influence can be neglected. This can be observed by comparing the performance with the sensing-free scenarios. The results for the proposed algorithm are further compared against the benchmarks depicting its superiority.

  \subsubsection{Sensing Performance}
  Fig.~\ref{fig:Fig7}~(a) shows the \ac{mse} of~the estimated sensing channel, denoted as $\mathrm{MSE} = \Ex{\norm{\mG - \hat{\mG}}^2 }{ }$, plotted against the distance $d$ of the target from the \ac{ps}. This analysis assumes that the clients remain at fixed distances for the MNIST dataset. The observed behavior is consistent with the one seen in Fig.~\ref{Fig8}:~ for small values of $d$, the sensing signal is strong, resulting in accurate estimations. However, as the target moves towards far distances, the sensing signal weakens, leading to a decline in sensing performance. This trend is further illustrated in Fig. \ref{fig:Fig7}~(b), which demonstrates that the estimated target's \acp{AoA} closely match the true \acp{AoA}. 
  At larger distances (approximately for $d>100$ m), the target's movement away from the PS leads to a significant increase in \ac{mse}, making it challenging for the PS to accurately estimate the target's \ac{AoA}. This phenomenon is depicted in Fig.~\ref{fig:Fig7}~(c).
This result suggests that there is always a specific range of distances at which both the desired sensing and learning behaviors are satisfied. Generally, this range depends on various parameters, such as the maximum allowed aggregation error and the bound on the sensing error metric (i.e., CRB). A rigorous investigation of these distance ranges under a set of abstract assumptions is, however, an interesting avenue for further research.

\vspace{-2mm}
\section{Conclusion} 
 This paper presents a novel framework that effectively integrates OTA-FEEL with sensing, providing a balanced approach to address the dual functionalities of wireless networks. The system design leverages the downlink transmission as a dual-purpose signal, serving as a carrier for the aggregated model and a radar probing pulse. By incorporating a single beamformer, we efficiently fulfill the communication needs and enable target sensing through maximum likelihood estimation of the reflected signals. This design ensures the quality of downlink transmission and estimation accuracy for sensing and restricts the aggregation error incurred during model aggregation. Moreover, to optimize the system's performance, we propose a Matching Pursuit-based user scheduling strategy that dynamically prioritizes either sensing or model aggregation based on real-time requirements. This flexibility allows the system to achieve a balance between sensing accuracy, communication efficiency, and computation needs. Extensive numerical investigations reveal scenarios where sensing can be achieved "for free." This work focuses on scenarios with a stationary target. Extending the
approach to moving targets is an interesting direction for further research. 
\vspace{-3mm}
	\appendices
	\section{ } 
	Let us assume that $\rmr_k\dbc{\ell}$ is distributed such that $\mF \mB_\setS \br \dbc{\ell}$ is distributed Gaussian. 
	Note that such an assumption holds, if either the local gradients are Gaussian distributed  or the number of selected devices is large enough such that the central limit theorem holds. It is worth mentioning that this is in general a realistic assumption as discussed in \cite{lee2020bayesian}.
	Moreover, since $\bzz \dbc{\ell} \sim \mathcal{CN}(0, \sigma_{\rm PS}^2 \mI)$, we can conclude that $\tilde{\bzz} \dbc{\ell}$ is Gaussian, i.e., $\tilde{\bzz} \dbc{\ell} \sim \mathcal{CN}\brc{ \boldsymbol{0}, \mR }$ where 
	\begin{align}
		&\mR =   \Ex{\tilde{\bzz} \dbc{\ell} \tilde{\bzz} \dbc{\ell}^\her }{ },\nonumber\\ & =  \Ex{ (\beta_{\rm c} \dbc{\ell} \mF \mB_\setS  \br \dbc{\ell} + \bzz \dbc{\ell})(\beta_{\rm c} \dbc{\ell}   \br^\her \dbc{\ell}\mB_\setS^\her \mF^\her + \bzz \dbc{\ell}^\her)}{ },\nonumber\\
		&=\beta_{\rm c} \dbc{\ell} \Ex{\mF \mB_\setS  \br \dbc{\ell}\br^\her \dbc{\ell}\mB_\setS^\her \mF^\her}{ }\nonumber\\& \; \; \; +\beta_{\rm c} \dbc{\ell} \Ex{\mF \mB_\setS \br \dbc{\ell}\bzz \dbc{\ell}^\her}{} + \beta_{\rm c} \dbc{\ell}\Ex{\bzz \dbc{\ell}    \br^\her \dbc{\ell}\mB_\setS^\her \mF^\her}{}\nonumber\\ & \; \; \; + \Ex{\bzz \dbc{\ell} \bzz \dbc{\ell}^\her}{}.\nonumber
	\end{align}
	Since $\bzz \dbc{\ell} $ is zero-mean, and noise and signal are independent, the second and third terms are zero.  By considering the fact that $\br\dbc{\ell}$ has uncorrelated and unit-variance entries, we conclude that
	$
	\mR= \beta_{\rm c}  \dbc{\ell} \mF \mB_\setS \mB_\setS ^\her \mF^\her  + \sigma_{\rm PS}^2 \mI.
	$
  \begin{figure}
\begin{center}
  \includegraphics{ 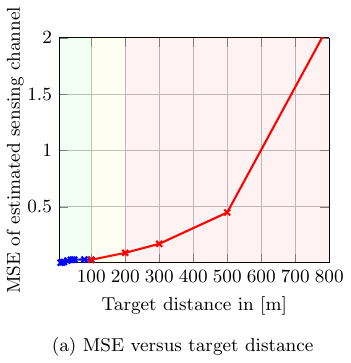}
   \\ \vspace{.18cm}
\includegraphics{ 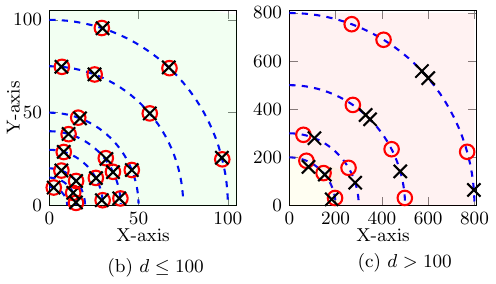} 
\end{center}
\centering
\caption{ The sensing performance of OTA-FEEL integrated~sensing.}
\begin{tabular}{r@{: }l@{\quad}r@{: }l}
   \hspace{-.3cm} {\Large {$\boldsymbol{\times}$}} & \small Estimated target location &  {\Large {$\textrm{o}$}} & \small Ground truth target location 
\end{tabular}
\label{fig:Fig7}
\end{figure}
 \vspace{-3mm}
	\section{} We now consider the deterministic unknown matrix $\mG$ and write its likelihood function  as
		$\mal\brc{\mG} = \prod_{\ell=\ell_0}^{\ell_0+L} p\brc{\tilde{\byy}\dbc{\ell} \vert \mG}$,
	where
	$p\brc{\tilde{\byy}\dbc{\ell} \vert \mG} = \pi^{-N} \exp\set{-\norm{\tilde{\byy}\dbc{\ell}  - \mT\mG \bww \dbc{\ell} x \dbc{\ell}}^2 }.$ The log-likelihood function is
	\begin{align}
		\hspace{-.31cm}\log\mal\brc{\mG} = -LN\log \pi - \sum_{\ell=1}^{L} \norm{\tilde{\byy}\dbc{\ell}  - \mT\mG 
			\bww \dbc{\ell} x \dbc{\ell}}^2. \label{eq:logliklihood}
	\end{align}
	We now use the alternative representation by the Kronecker product.  Since $\mT\mG \bww \dbc{\ell} $ is a column vector, it equals the vectorized form of its transpose, i.e., 
	$\mT\mG \bww \dbc{\ell} =   \mathrm{vec} \brc{\brc{\mT\mG \bww \dbc{\ell}}^\trp}=\mathrm{vec} \brc{\bww ^\trp\dbc{\ell}\mG^\trp \mT^\trp }\stackrel{\star}{=}(\mT \otimes \bww^\trp \dbc{\ell}) \bvv$, where $\bvv = \mathrm{vec} \brc{\mG}^\trp$ is the $N^2$-dimensional vectorized form of $\mG$ and $\star$ comes from the identity $\mathrm{vec} \brc{\mA \mB \mC}= (\mC^\trp \otimes \mA) \mathrm{vec}\brc{\mB}$. The log-likelihood term is hence written as
	\begin{align}
		\log\mal\brc{\mG} = -LN\log \pi - \sum_{\ell=1}^{L} \norm{\tilde{\byy}\dbc{\ell}  - x \dbc{\ell}\brc{ \mT \otimes \bww^\trp \dbc{\ell}}  \bvv}^2. \nonumber
	\end{align}
	We now determine the matrix of Fisher information as
	\begin{align*}
	    I\brc{\mG} = - \Ex{\nabla^2 \log\mal\brc{\mG} }{ \prod_{\ell=1}^{L} p\brc{\tilde{\byy}\dbc{\ell} \vert \mG} }.
	\end{align*}
	In general, the derivatives are taken with respect to the augmented form~of~$\bvv$. We can however invoke the quadratic form of the log-likelihood and take the derivatives directly in the complex space. The Fisher information matrix is given by
	\begin{align}
		I\brc{\mG} = 2\sum_{\ell=1}^{L} \abs{x \dbc{\ell}}^2   \brc{ \mT^\her\otimes \bww^*\dbc{\ell} } \brc{ \mT \otimes \bww^\trp\dbc{\ell} },
	\end{align}
	and the standard Fisher information matrix is determined by augmenting this matrix. 
	Therefore, using $\mT^\her \mT  = \mR^{-1}$~and the identity $\brc{\mA \otimes \mB} \brc{\mC \otimes \mD} = {\mA\mC \otimes \mB\mD}$ we can write
	$I\brc{\mG} =  2\sum_{\ell=1}^{L}  \mR^{-1} \otimes \bww^*\dbc{\ell} \bww^\trp\dbc{\ell}$.
	The Cram\'er-Rao bound indicates that the covariance matrix of the error~term $\hat{\bvv} - \bvv$ for the \ac{ml} estimation $\hat{\bvv}$ converges to
	$\mC = \frac{1}{2} \brc{ \mR^{-1} \otimes\sum_{\ell=1}^{L}  \bww^*\dbc{\ell} \bww^\trp\dbc{\ell}}^{-1}
	$ as $L$ grows large.~By~using $\brc{\mA \otimes \mB}^{-1} = {\mA^{-1} \otimes \mB^{-1}}$, the Cram\'er-Rao bound reads $
		\mC = \frac{1}{2} \brc{\mR^{-1} \otimes \mW^* \mW^\trp }^{-1}
		= \frac{1}{2} \mR \otimes \brc{\mW^* \mW^\trp }^{-1}.
$
\vspace{-3mm}
	\section{}
	\par Following the Cram\'er-Rao theorem, we can conclude that for large $L$ and $M$, the error term $\rme \dbc{\ell} $ is a zero-mean Gaussian random variable. Since the first term and second term of $\rme \dbc{\ell}$ are independent, the variance is given by
	\begin{subequations}
		\begin{align}
			&\Ex{\abs{\rme \dbc{\ell}}^2}{} = \beta_{\rm s} \dbc{\ell} \Ex{\abs{\frac{\bc^\her  }{\sqrt{\eta}}  \brc{\mG - \hat{\mG}}
					\bww \dbc{\ell} }^2 }{}\nonumber\\&\; \; \;+ \Ex{\abs{ \frac{\bc^\her  }{\sqrt{\eta}}  \bzz \dbc{\ell}}^2}{} ,\nonumber\\
			&=\dfrac{ \beta_{\rm s}\dbc{\ell}}{\eta} \bc^\her \Ex{\brc{\mG - \hat{\mG}} \bww \dbc{\ell} \bww ^\her \dbc{\ell} \brc{\mG - \hat{\mG}}^\her}{} \bc\nonumber\\&\; \; \;+ \dfrac{ \norm{\bc}^2}{\eta} \sigma_{\rm PS}^2.\nonumber
		\end{align} 
	\end{subequations}
	By employing the alternative representation using the Kronecker product, i.e., $(\mG-\hat{\mG}) \bww \dbc{\ell} =  (\mI \otimes \bww^\trp \dbc{\ell}) (\bvv-\hat{\bvv})$ with $\bvv$ and $\hat{\bvv}$ being the vectorized form of $\mG$ and $\hat{\mG}$, we have 
	\begin{subequations}
		\begin{align}
			&\Ex{\abs{\rme \dbc{\ell}}^2}{} \nonumber\\&=\dfrac{\beta_{\rm s}\dbc{\ell}}{\eta}  \bc^\her \Ex{(\mI \otimes \bww^\trp \dbc{\ell}) (\bvv-\hat{\bvv})  (\bvv-\hat{\bvv}) ^\her (\mI \otimes \bww^\trp \dbc{\ell})^\her }{} \bc\nonumber \\ &\; \; \; +  \dfrac{\norm{\bc}^2}{\eta} \sigma_{\rm PS}^2,\nonumber\\
			&=\dfrac{\beta_{\rm s} \dbc{\ell}}{\eta} \bc^\her (\mI \otimes \bww^\trp \dbc{\ell}) \Ex{ (\bvv-\hat{\bvv})  (\bvv-\hat{\bvv}) ^\her }{} (\mI \otimes \bww^* \dbc{\ell})\bc \nonumber\\& \; \; \; +\dfrac{\norm{\bc}^2}{\eta} \sigma_{\rm PS}^2,\nonumber \\ &=\dfrac{ \beta_{\rm s} \dbc{\ell}}{\eta} \brc{\bc^\her\otimes \bww^\trp \dbc{\ell}} \mC \brc{\bc \otimes \bww^* \dbc{\ell}}
			+  \dfrac{ \norm{\bc}^2 }{\eta}\sigma_{\rm PS}^2. \nonumber
		\end{align} 
	\end{subequations}

\IEEEpeerreviewmaketitle
\bibliography{ref}
\bibliographystyle{IEEEtran}
\vspace{-15mm}
\begin{IEEEbiography}
[{\includegraphics[width=1.1in,height=1.2in,clip,keepaspectratio]{ 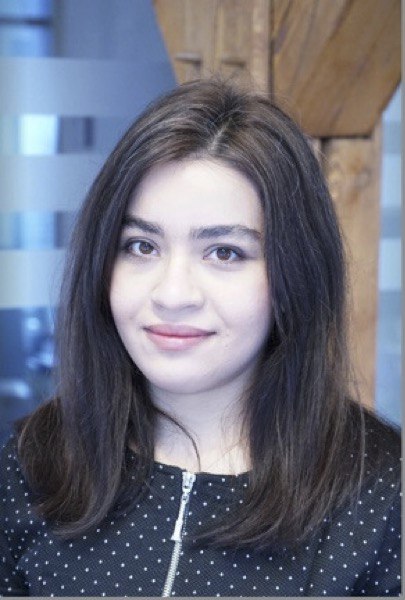}}]
{Saba Asaad} (Member IEEE) received the B.Sc. and
M.Sc. degrees in electrical engineering, communi-
cations from the Sharif University of Technology,
Tehran, Iran, in 2012 and 2014, respectively, and the Ph.D. degree (with distinction) from the University of Tehran, Tehran, Iran, in 2018. From 2018 to 2023, she was with the Institute for Digital Communications, Friedrich-Alexander-University of Erlangen-Nuremberg, Erlangen, Germany, as a Postdoctoral Fellow. Since 2023, she has been with the Next Generation Wireless Networks Research Lab, York University, Toronto, ON, Canada, as a Postdoctoral Research Associate. Her research interests include distributed learning over wireless networks, information theory, wireless communications with focus on MIMO transmission techniques, and physical layer security.
\end{IEEEbiography}
\vspace{-15mm}
\begin{IEEEbiography}
[{\includegraphics[width=1.1in,height=1.4in,clip,keepaspectratio]{ 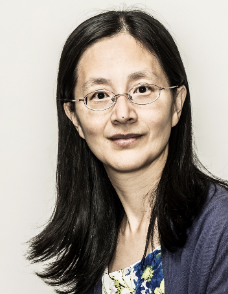}}]
{Ping Wang} (Fellow, IEEE) is a Professor at the Department of Electrical Engineering and Computer Science, York University, and a Tier 2 York Research Chair. Prior to that, she was with Nanyang Technological University, Singapore, from 2008 to 2018. Her recent research interests focus on integrating Artificial Intelligence (AI) techniques into communications networks. Her scholarly works have been widely disseminated through top-ranked IEEE journals/conferences and received the IEEE Communications Society Best Survey Paper Award in 2023, and the Best Paper Awards from IEEE prestigious conference WCNC in 2012, 2020 and 2022, from IEEE Communication Society: Green Communications \& Computing Technical Committee in 2018, from IEEE flagship conference ICC in 2007. She has been serving as the associate editor-in-chief for IEEE Communications Surveys \& Tutorials and an editor for several reputed journals, including IEEE Transactions on Wireless Communications. She is a Distinguished Lecturer of the IEEE Vehicular Technology Society (VTS). She is also the Chair of the Education Committee of IEEE VTS.
\end{IEEEbiography}
\vspace{-15mm}
\begin{IEEEbiography}
[{\includegraphics[width=1.1in,height=1.4in,clip,keepaspectratio]{ 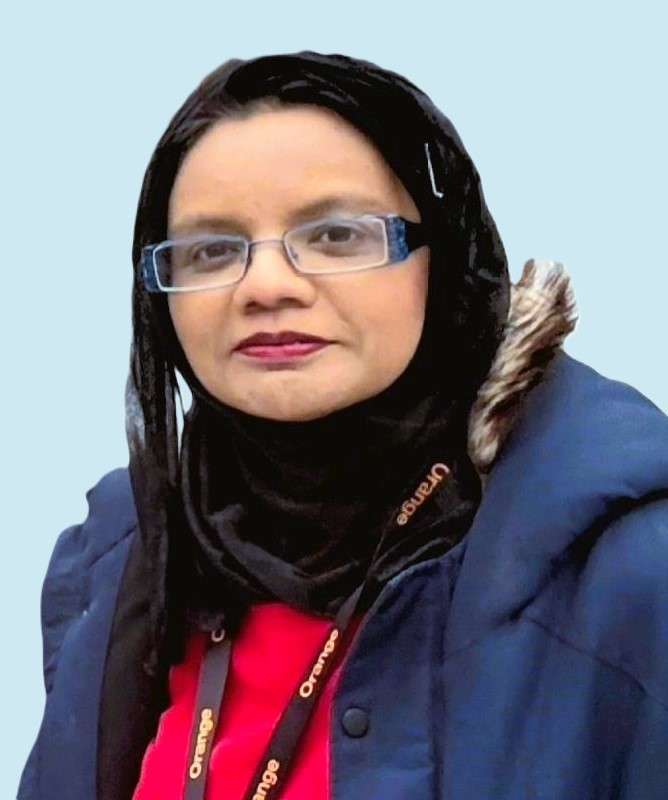}}]
{Hina Tabassum}  (Senior Member, IEEE) received the Ph.D. degree from the King Abdullah University of Science and Technology (KAUST). She is currently an Associate Professor with the Lassonde School of Engineering, York University, Canada, where she joined as an Assistant Professor, in 2018. She is also appointed as a Visiting Faculty at University of Toronto in 2024 and the York Research Chair of 5G/6G-enabled mobility and sensing applications in 2023, for five years. Prior to that, she was a postdoctoral research associate at University of Manitoba, Canada.  She has been selected as IEEE ComSoc Distinguished Lecturer (2025-2026). She is listed in the Stanford's list of the World’s Top Two-Percent Researchers in 2021-2024. She received the Lassonde Innovation Early-Career Researcher Award in 2023 and the N2Women: Rising Stars in Computer Networking and Communications in 2022.  She has been recognized as an Exemplary Editor by the IEEE Communications Letters (2020), IEEE Open Journal of the Communications Society (IEEE OJCOMS) (2023 - 2024), and IEEE Transactions on Green Communications and Networking (2023).  She was recognized as an Exemplary Reviewer (Top 2\% of all reviewers) by IEEE Transactions on Communications in 2015, 2016, 2017, 2019, and 2020. She is the Founding Chair of the Special Interest Group on THz communications in IEEE Communications Society (ComSoc)-Radio Communications Committee (RCC). She served as an Associate Editor for IEEE Communications Letters (2019–2023), IEEE OJCOMS (2019–2023), and IEEE Transactions on Green Communications and Networking (2020–2023). Currently, she is also serving as an Area Editor for IEEE OJCOMS and an Associate Editor for IEEE Transactions on Communications, IEEE Transactions on Wireless Communications, and IEEE Communications Surveys and Tutorials.
\end{IEEEbiography}
\end{document}